\newtheorem{theorem}{Theorem}
\newtheorem{lemma}{Lemma}
\newtheorem{proposition}{Proposition}
\newenvironment{proof}[1][Proof]{\textbf{#1.} }{\ \rule{0.5em}{0.5em}}
\newcommand{\RomanNumeralCaps}[1]
    {\MakeUppercase{\romannumeral #1}}
\begin{document}

\title{Credit Valuation Adjustment with Replacement Closeout: Theory and Algorithms}


\thispagestyle{empty}
\author{Chaofan Sun\and Ken Seng Tan\and Wei Wei %
\thanks{Sun: Department of Actuarial Mathematics and Statistics, School of Mathematics and Computer Science, Heriot-Watt University, Edinburgh, Scotland, EH14, 4AS, UK. E-mail: cs57@hw.ac.uk.  %
Tan: Division of Banking \& Finance, Nanyang Business School, Nanyang Technological University,
Singapore. Email: kenseng.tan@ntu.edu.sg.
Wei: Department of Actuarial Mathematics and Statistics, School of Mathematics and Computer Science, Heriot-Watt University, Edinburgh, Scotland, EH14, 4AS, UK. E-mail: wei.wei@hw.ac.uk. 
}
}
%
%
\thispagestyle{empty}
\date{\today}
\maketitle
\thispagestyle{empty}

\vspace*{-0.4cm}
\begin{abstract}
The replacement closeout convention has drawn more and more attention since the 2008 financial crisis. Compared with the conventional risk-free closeout, the replacement closeout convention incorporates the creditworthiness of the counterparty and thus providing a more accurate estimate of the Mark-to-market value of a financial claim. In contrast to the risk-free closeout, the replacement closeout renders a nonlinear valuation system, which constitutes the major difficulty in the valuation of the counterparty credit risk.\par
In this paper, we show how to address the nonlinearity attributed to the replacement closeout in the theoretical and computational analysis. In the theoretical part, we prove the unique solvability of the nonlinear valuation system and study the impact of the replacement closeout on the credit valuation adjustment. In the computational part, we propose a neural network-based algorithm for solving the (high dimensional) nonlinear valuation system and effectively alleviating the curse of dimensionality. We numerically compare the computational cost for the valuations with risk-free and replacement closeouts. The numerical tests confirm both the accuracy and the computational efficiency of our proposed algorithm for the valuation of the replacement closeout.

\end{abstract}

\noindent {\it Key words}\/: credit risk, CVA, replacement closeout, nonlinear PDE, neural network.

\newpage
\setcounter{page}{1}
\section{Introduction}\label{SecIntroduction}
The purpose of this paper is to provide theoretical and computational underpinnings for the application of the replacement closeout convention in the valuation of counterparty credit risk. Credit valuation adjustment (CVA), commonly viewed as the value of counterparty credit risk, has become an essential risk measure in the financial risk management since the 2008 global financial crisis. A challenging task in the calculation of CVA is modelling and calculating the Mark-to-market (MtM) closeout amount when a  default event is triggered. Before the 2008 financial crisis, almost all credit risk models were based on the so called risk-free closeout convention in the sense that the counterparty is assumed to be risk-free and thus the expectation of the discounted future cash flows is used in recovery calculation\footnote{See for example \cite{jarrow2001counterparty}, \cite{blanchet2004hazard} \cite{bielecki2008pricing}, \cite{BrigoChourdakisForth} and \cite{brigo2009counterparty} for the valuation of defaultable financial claims under the risk-free closeout convention. }. \par


A painful lesson of the global financial crisis is that no counterparty should be considered risk-free and banks must mark to market as much as possible\footnote{See Chapter CAP50 of Basel \RomanNumeralCaps{3}.}. Due to the inadequate consideration for the credit quality of the counterparty, the risk-free closeout fails to adequately reflect the MtM value of a financial claim upon the default time. In the post-crisis period, many studies have questioned the validity of the risk-free closeout (e.g., \citeauthor{burgard2011partial} \citeyear{burgard2011partial}, \citeauthor{brigo2011close} \citeyear{brigo2011close} and \citeauthor{brigo2012counterparty} \citeyear{brigo2012counterparty}.). In particular, the International Swaps and Derivatives Association (ISDA) highlights the importance of incorporating the creditworthiness of the counterparty in closeout, and thus suggesting an alternative way of calculating the recovery.\footnote{Quoting from ISDA (2010): ``Upon default close-out, valuations will in many circumstances reflect the replacement cost of
transactions calculated at the terminating party's bid or offer side of the market, and will often take into account the creditworthiness of the terminating party.''}\par

To overcome the drawback of the risk-free closeout convention, more and more attention has been devoted to the replacement closeout over the past decade.\footnote{See for example \cite{crepey2010counterparty}, \cite{hu2012fully}, \cite{henry2012cutting}, \cite{brigo2016nonlinearity} and \cite{biagini2019pricing}.}
The idea of working with the pre-default value in the recovery calculation is first suggested in \cite{duffie1996swap} though they do not use the term ``replacement closeout". As opposed to the risk-free closeout, the replacement closeout is defined via the pre-default value of a financial claim. As the  counterparty credit risk is inherently embedded with the pre-default value, the replacement closeout
is more appropriate than the risk-free closeout since it takes into consideration the counterparty's creditworthiness so that
the MtM value of the contract is adequately captured and thus yields a more accurate estimate of CVA. The appropriateness of replacement closeout, however, is achieved at the expense of valuation complexity. This is attributed by the fact that one cannot obtain the pre-default value without knowing the future pre-default value and vice versa, thus the definition of the pre-default value under the replacement closeout is actually a loop rather than an explicit representation. Consequently, the replacement closeout creates a nonlinear system, which poses the major difficulty in the valuation.\par


In this paper, we aim to contribute to the literature on replacement closeout analysis, particularly in addressing the nonlinearity, from two aspects: theoretically and computationally.
In the theoretical analysis, we produce two theoretical results. First, we prove that the nonlinear valuation system arising from the replacement closeout admits a unique solution in broad generality. Second, we show that the CVA calculated from the replacement closeout is always larger than the corresponding value from the risk-free closeout. This result suggests that without taking into consideration the counterparty's creditworthiness in recovery calculation would underestimate the CVA. More severely, through an example of a defaultable European option, we find that the underestimation could be significant, especially for unhealthy counterparty's creditworthiness. Together with the example, our theoretical result unequivocally points out
the importance of tracking the MtM value of financial claims upon the default times and cautions against ignoring the creditworthiness of the counterparty in the closeout. It is noteworthy that the closeout function considered in our model is not only restricted to CVA models, but also nests most of models of debit valuation adjustment (DVA) and funding valuation adjustment (FVA). Therefore all the results aforementioned also apply to the analysis of valuation adjustment (XVA)\footnote{For the studies about DVA, FVA and XVA, see for example \cite{pallavicini2011funding}, \cite{albanese2015fva}, \cite{hull2016xvas} and \cite{andersen2019funding}.}.
\par

From a practitioner's perspective, the applicability of the replacement closeout largely hinges on its computational complexity. The nonlinearity of the replacement closeout gives rise to significant computational burdensome. Particularly, when the number of risk factors\footnote{The sources of risk factors would be macroeconomic factors (e.g., interest rates), market indices (e.g., S\&P 500 index), the credit ratings of the investors and their counterparties, and the randomness of assets (e.g. multi-asset, or stochastic volatility models), etc.} increases, the computational complexity of the nonlinear valuation system  grows exponentially. As the number of risk factors usually indicates the number of dimensions, the exponential growth on the computational cost is also called the ``curse of dimensionality'' \footnote{See, e.g., \cite{bellman1957dynamic}.}. Recently,  the deep backward stochastic differential equation (BSDE) method, advocated by \cite{weinan2017deep}, \cite{han2018solving} and \cite{beck2019machine}, has demonstrated a promising performance in overcoming the curse of dimensionality for high dimensional dynamic models. In the Deep BSDE method, the dynamic model (usually a partial differential equation (PDE) problem) is reformulated into a BSDE problem. Then, applying a standard simulation and discretization to the BSDE problem, one can obtain an optimization problem associated with the terminal value. Finally, neural networks are employed to approximate the unknown gradients of the optimization problem. \par
The Deep BSDE method has been applied to risk management\footnote{Besides the Deep BSDE method, there are also several other attempts to approximate the high dimensional dynamic models in various areas \citep[e.g.,][]{carleo2017solving,sirignano2018dgm, weinan2018deep, weinan2019multilevel}. Readers interested in this direction would be referred to the literature surveys by \cite{ruf2020neural} and \cite{beck2020overview}.}.  Building on \cite{weinan2017deep}, \cite{henry2017deep} develops a primal-dual method for the valuation of counterparty credit risk. Combining the dynamic programming, \cite{hure2019some} develop a new Deep BSDE solver which targets at  approximating the value function.  Based on the Deep BSDE method, \cite{gnoatto2020deep} design an XVA solver and deduce a posteriori bounds on the error of the neural network approximations. \cite{albanese2021xva}, with a BSDE formulation and GPU computing, propose a deep learning regression method for the credit risk valuation.
In the existing Deep BSDE solvers for credit risk valuation, the decision variables, usually the value functions and (or) their gradients, are viewed as a sequence of functions of state variables, and each function at a time node is attached to a fully connected (FC) neural network. For ease of reference, we refer to such a Deep BSDE solver as a multi-FC Deep BSDE solver. 
\cite{raissi2017physics}, \cite{raissi2018deep} and \cite{raissi2018forward}, on the other hand, develop a single network-based Deep BSDE method. Because of the dramatic decrease of the number of neural networks, \citeauthor{raissi2017physics}'s method allows more delicate neural network architectures to be applied in the Deep BSDE solver. From the reported numerical experiments in  \cite{chan2019machine}, one can see that the multi-FC Deep BSDE solver could be significantly improved by the single network-based method with a proper neural network.\par

 Motivated by  \cite{raissi2018forward}, the computational contribution of this paper is to develop a Deep BSDE algorithm for the calculation of CVA with replacement closeout. In contrast to the existing literature on the valuation of credit risk via the Deep BSDE method, our algorithm is single network-based. Moreover, instead of the fully connected neural network used in \cite{weinan2017deep}, we approximate the gradient function by the long-short term memory (LSTM) neural network, which is originally
 proposed by \cite{hochreiter1997long} and has been shown to have impressive power in solving real-world problems with sequential data\footnote{For example, the LSTM neural network has been successfully used in time series prediction \citep{schmidhuber2005evolino}, speech recognition \citep{graves2013speech} and rhythm learning \citep{gers2002learning}. }. The numerical tests suggest that our proposed solver produces satisfactory efficiency and yields better convergence stability than the multi-FC Deep BSDE solver.
Finally, we compare the computational cost for calculating CVA under both risk-free and replacement closeout. We test our algorithm on the valuations of a defaulable claim with risk-free and replacement closeout respectively.  Because of the computational efficiency of the single network-based algorithm in calculating the replacement closeout, it only requires about half of the computational time compared to that with the risk-free closeout. This therefore effectively alleviates the curse of dimensionality and enhances the practical application of the replacement closeout.
\par

The remainder of the paper is organized as follows. In Section \ref{GeneralModel}, we formulate the valuation of a generic defaultable financial claim with replacement closeout. In Section \ref{Sec:Valuation}, we prove that the valuation equation admits a unique solution. The proof is based on an iteration, which asserts that the conventional risk-free closeout would underestimate CVA. In Section \ref{Sec:Algorithm}, we describe the single network-based algorithm. Numerical examples and discussions are provided in Section \ref{sec:NRD}. Section \ref{Sec:Conclusions} concludes the paper and lemmas for the proof of main results are relegated to the appendix.

\section{General Setup }\label{GeneralModel}
\subsection{The dynamics}
We start with a filtered probability space $(\Omega,\mathcal{F},\{\mathcal{F}_t\}_{t\ge0},\mathbb{P})$ which satisfies the usual conditions. Let $W= (W_1,W_2,\ldots,W_n)^T$ define an $n$-dimensional Brownian motion adapted to the filtration $\{\mathcal{F}_t\}_{t\ge0}.$  Filtration $\{\mathcal{F}_t\}_{t\ge0}$ captures all accessible information generated by a family of Markov process $X=X^x$ parameterized by the initial state $X_0=x\in\mathbb{R}^m$ and governed by the following
stochastic differential equation (SDE),
\begin{equation}\label{SDE}
dX_t = \mu(t,X_t) dt + \sigma(t,X_t) dW_t
\end{equation}
where  $\mu(t,x) = (\mu_1(t,x),\mu_2(t,x),\ldots,\mu_m(t,x))^T$, $\sigma(t,x) = (\sigma_{ij}(t,x))_{i=1,2,\ldots,m, j=1,2,\ldots,n}$ are functions defined on $(\mathbb{R}_+,\mathbb{R}^m)$ and valued in $\mathbb{R}^m$ and $\mathbb{R}^{m\times n}$ respectively.\par
Suppose that $\mu$ and $\sigma$ are Lipschitz continuous functions, i.e., there exists an $L>0$ such that for  $x\neq y, t\neq s$,
\begin{align}\label{Lip}
\vert g(t,x) - g(s,y)\vert\le L(\vert t-s \vert+\vert x - y\vert),
\end{align}
where $g\in\{\mu, \sigma\}, \vert A\vert: = \sqrt{\sum_{j=1}^m\sum_{i=1}^n A_{ij}^2}, \forall A\in \mathbb{R}^{n\times m}, n, m \in\mathbb{N}_+.$ \par
Condition (\ref{Lip}) guarantees the existence and uniqueness of the strong solution to SDE (\ref{SDE}). Moreover, it follows from standard arguments in SDE \citep[e.g.][chapter 1]{pham2009continuous} that, for any $T>0, p>1$, there exists a constant $L$, such that
\begin{align}\label{Moment}
\mathbb{E}\left[\sup_{t\in[0,T]}\vert X_t \vert ^p \right]\le L(\vert x\vert^p+1),
\end{align}
where $L$ depends on $T$ and $p.$\par
Define the infinitesimal Markov generator associated with $X$ by
\begin{align*}
\mathcal{L}:=\frac{\partial}{\partial t} + \sum_{i=1}^m\sum_{j=1}^m a_{ij}(t,x)\frac{\partial^2}{\partial x_i\partial x_j}+\sum_{i=1}^m b_i(t,x)\frac{\partial}{\partial x_i},
\end{align*}
where $a_{ij}(t,x) =\frac{1}{2} \sum_{k=1}^n\sigma_{ik}(t,x)\sigma_{jk}(t,x)$.
Suppose that $\mathcal{L}$ is uniformly parabolic for any $T>0$, i.e., there exists a positive constant $\lambda_1>0$, such that for any $\xi=(\xi_1,\xi_2,\ldots,\xi_m)^T\in\mathbb{R}^m,$
\begin{align}\label{Parabolic}
\sum_{i=1}^m\sum_{j=1}^ma_{ij}(t,x)\xi_i\xi_j\ge \lambda_1\vert\xi\vert^2  , \forall (t,x)\in[0,T]\times\mathbb{R}.
\end{align}

\subsection{Default times and hazard rates}
Let $\tau$  be a non-negative random variable on $(\Omega,\mathcal{F},\mathbb{P})$ that captures the default time. Define the associated default indicator process by $H_t = 1_{\tau\le t}$. Let $\{\mathcal{H}_t\}_{t\ge 0}$ denote the filtration generated by $\{H_t\}_{t\ge 0}$. Suppose that all information in the market at time $t$, defined by $\sigma$-algebra $\mathcal{G}_t,$ comprises the accessible information represented by $\mathcal{F}_t$ and the information generated by the observation of the occurrence of the default up to $t$, i.e., $\mathcal{G}_t=\mathcal{F}_t\vee\mathcal{H}_t$.\par
Denote $G_t =\mathbb{P}(\tau>t\vert\mathcal{F}_t)$, so that $G_t$ represents the survival process of the default time $\tau$ with respect the reference filtration $\{\mathcal{F}_t\}_{t\ge 0}$. Define the hazard rate associated with $G_t$ by $\lambda_t$, e.g., $G_t = \exp\{-\int^t_0\lambda_sds\}$, where $\{\lambda_t\}_{t\ge0}$ is a non-negative progressively measurable process with respect to $\{\mathcal{F}_t\}_{t\ge0}$, with integrable sample paths. In order to keep the Markov property of the model, throughout the paper  we suppose that $\lambda_t = \lambda(t,X_t)$, where $\lambda$ is a Borel function from $\mathbb{R}_+\times \mathbb{R}^m$ to $\mathbb{R}_+.$
\subsection{Cash flows, pre-default values and CVA }
Consider a defaultable claim maturing at $T$. The holder of the claim receives a payment flow with rate $c_t$ until the default time $\tau$ or the maturity date $T$, whichever comes earlier. At the maturity date $T$, the holder receives the terminal payoff $\phi_T$ if the default event has not occurred yet. If there is a default before time $T$, then the holder receives a lump-sum payoff $Z_{\tau}$, referred to as the closeout amount, at the default time $\tau$. Denote the risk-free rate by $r_t$. Suppose that $c_t, \phi_T, Z_{\tau}, r_t$  are given by $c(t,X_t), \phi(X_T)$ $Z(\tau,X_{\tau}), r(t,X_t)$ respectively, where $c:\mathbb{R}_+\times\mathbb{R}^m\rightarrow\mathbb{R}, \phi: \mathbb{R}^m\rightarrow\mathbb{R}, Z:\mathbb{R}_+\times\mathbb{R}^m\rightarrow\mathbb{R}, r: \mathbb{R}_+\times\mathbb{R}^m\rightarrow\mathbb{R}_+$ are Borel functions.\par
The discounted payoff process of the defaultable claim is given by
\begin{align*}
P_t =\int^{\tau\wedge T}_te^{-\int^s_t r(u,X_u)du}c(s,X_s)ds + 1_{t<\tau\le T}Z(\tau,X_{\tau})e^{-\int^{\tau}_t r(s,X_s)ds} + \phi(X_T)e^{-\int^T_t r(s,X_s)ds} 1_{\tau>T}.
\end{align*}
Define the value process of the defaultable claim by $\mathbb{E}[P_t\vert\mathcal{G}_t]$, then it follows from the standard arguments in the literature on credit risk \citep[e.g.][chapter 8]{bielecki2013credit} that $\mathbb{E}[P_t\vert\mathcal{G}_t] = 1_{\tau>t}\bar{V}(t,X_t)$, where $\bar{V}$ is referred to as pre-default value function and is given by

\noindent $\bar{V}(t,x) = $
\begin{align}\label{PreDefaultValue}
\mathbb{E}\left[\int^T_t(c(s,X_s)+\lambda(s,X_s)Z(s,X_s))e^{-\int^s_tr(u,X_u)+\lambda(u,X_u)du}ds+e^{-\int^T_tr(s,X_s)+\lambda(s,X_s)ds}\phi(X_T)\vert X_t=x \right]
\end{align}
We suppose that $c, \lambda, Z, \phi, r$ have polynomial growth in $x$,  uniformly in $t$, i.e., there exist $L>0, n>0,$ such that
\begin{align}\label{PolynomialGrowth}
\vert g(t,x)\vert\le L(\vert x\vert ^n+1),\forall (t,x)\in[0,T]\times \mathbb{R}^m,
\end{align}
where $g\in\{c, \lambda, Z, \phi, r\}$ and $L,n$ depend on $T.$\par
The following proposition demonstrates that $\bar{V}$ has polynomial growth in $x$, uniformly in $t$.
\begin{proposition}\label{Prop:PolynomialGrowth}
 There exist $L>0, p>0$, such that
\[
\vert \bar{V}(t,x) \vert\le L(\vert x\vert ^p+1), \forall (t,x)\in[0,T]\times \mathbb{R}^m.
\]
\end{proposition}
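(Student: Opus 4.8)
The plan is to estimate $|\bar{V}(t,x)|$ directly from the representation \eqref{PreDefaultValue}, first discarding the stochastic discount factors --- which is legitimate since $r$ and $\lambda$ are $\mathbb{R}_+$-valued, so $e^{-\int_t^s r(u,X_u)+\lambda(u,X_u)\,du}\le 1$ for all $s$ --- and then controlling the surviving integrands with the polynomial growth hypothesis \eqref{PolynomialGrowth} together with the moment estimate \eqref{Moment}.

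Concretely, I would first apply the conditional Jensen inequality to \eqref{PreDefaultValue} to get
\begin{align*}
|\bar{V}(t,x)| &\le \mathbb{E}\Big[\int_t^T\big(|c(s,X_s)| + \lambda(s,X_s)|Z(s,X_s)|\big)\,e^{-\int_t^s r(u,X_u)+\lambda(u,X_u)\,du}\,ds \\
&\qquad\qquad + e^{-\int_t^T r(s,X_s)+\lambda(s,X_s)\,ds}\,|\phi(X_T)| \;\Big|\; X_t = x\Big],
\end{align*}
and then drop every exponential factor, leaving the bound $\mathbb{E}\big[\int_t^T |c(s,X_s)| + \lambda(s,X_s)|Z(s,X_s)|\,ds + |\phi(X_T)| \,\big|\, X_t=x\big]$. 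Since a product of two functions of polynomial growth is again of polynomial growth, \eqref{PolynomialGrowth} --- after harmlessly enlarging $n$ so that $n\ge 1$ --- supplies constants $L,n$ with $|c(s,y)|,|\phi(y)|\le L(|y|^n+1)$ and $\lambda(s,y)|Z(s,y)|\le 2L^2(|y|^{2n}+1)$ for all $(s,y)$. Replacing $|X_s|,|X_T|$ by $\sup_{u\in[t,T]}|X_u|$ and carrying out the time integral (of length at most $T$) then yields a constant $C=C(L,n,T)$ with
\begin{align*}
|\bar{V}(t,x)| \le C\,\mathbb{E}\Big[\sup_{u\in[t,T]}|X_u|^{2n} + 1 \;\Big|\; X_t = x\Big].
\end{align*}

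The last step is to invoke the moment bound. By the Markov property the conditional expectation equals $1+\mathbb{E}\big[\sup_{u\in[t,T]}|X^{t,x}_u|^{2n}\big]$, where $X^{t,x}$ denotes the solution of \eqref{SDE} on $[t,T]$ started from $x$ at time $t$; running the same a priori estimate that produces \eqref{Moment} on the interval $[t,T]$ gives $\mathbb{E}\big[\sup_{u\in[t,T]}|X^{t,x}_u|^{2n}\big]\le L'(|x|^{2n}+1)$. Combining the two displays and taking $p=2n$ completes the argument. I expect the only delicate point to be this last step: \eqref{Moment} is quoted for the process started at time $0$, so one must check that the Gr\"onwall and Burkholder--Davis--Gundy estimates behind it (cf.\ \citep[chapter 1]{pham2009continuous}) deliver a constant that does not deteriorate as the initial time $t$ ranges over $[0,T]$ --- which holds because the Lipschitz constant in \eqref{Lip} is uniform in $t$. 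The remaining manipulations, including the a posteriori verification that the integrand is integrable so that Jensen applies, are routine bookkeeping of polynomial exponents.
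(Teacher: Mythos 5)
Your proposal is correct and takes essentially the same route as the paper: drop the discount factors using $r,\lambda\ge0$, invoke the polynomial-growth hypothesis~\eqref{PolynomialGrowth} on $c,\lambda,Z,\phi$, pass to the running supremum, and close with the moment bound~\eqref{Moment}. Your explicit remark about applying~\eqref{Moment} conditionally via the Markov property --- with the observation that the Gr\"onwall/BDG constants do not deteriorate as the initial time varies --- actually supplies a detail the paper's proof glosses over when it conditions on $X_t=x$ and then cites~\eqref{Moment}, which is stated for the process started at time $0$.
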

\begin{proof}
The proof is just an immediate consequence of some algebra and we provide the proof for completeness.
It is easy to see that
\begin{align*}
\vert \bar{V}(t,x)\vert \le \mathbb{E}\left[\int^T_t(\vert c(s,X_s)\vert +\lambda(s,X_s)\vert Z(s,X_s) \vert )ds+\phi(X_T)\vert X_t=x \right].
\end{align*}
Then it follows from (\ref{PolynomialGrowth}) that there exist $L_0>0, p_0>0,$ such that
\begin{align*}
\vert \bar{V}(t,x)\vert& \le \mathbb{E}\left[\int^T_tL_0(\vert X_s\vert^{p_0}+1) ds+L_0(\vert X_T\vert^{p_0}+1)\right] \\& \le \mathbb{E}\left[TL_0\left(\sup_{0\le s\le T}\vert X_s\vert^{p_0}+1\right) +L_0(\vert X_T\vert^{p_0}+1)\vert X_t=x\right].
\end{align*}
The result is an immediate consequence of (\ref{Moment}).
\end{proof}\par
Moreover, to guarantee the $C^{1,2}$ regularity of $\bar{V}$,  we suppose that $c, \lambda, Z, \phi, r$ are continuous functions of $(t,x)$ and that $c, \lambda, Z, r$ are locally H{\"o}lder continuous in $x$, uniformly in $t$, i.e., for any compact set $E\subset[0,T]\times\mathbb{R}^m$, there exist $L>0, \alpha\in(0,1)$ such that
\begin{align*}
\vert g(t,x) -g(t,y)\vert\le L\vert x-y \vert^{\alpha}, \forall t, x, y\in E,
\end{align*}
where $g\in\{c, \lambda, Z, r\}$ and $L,\alpha$ depend on $E.$\par

By denoting $U(t,x)$ as  the risk-free value of the defaultable claim, then it can be obtained from (\ref{PreDefaultValue}) via
setting $\tau=\infty,$ or equivalently $\lambda(t,x)=0$; i.e.,
\begin{align}\label{RiskFreeValue}
U(t,x) = \mathbb{E}\left[\int^T_tc(s,X_s)e^{-\int^s_tr(u,X_u)du}ds+e^{-\int^T_tr(s,X_s)ds}\phi(X_T)\vert X_t=x \right].
\end{align}
Recall that CVA measures the value of credit risk. We define the CVA for a defaultable claim by the difference between the risk-free value and the pre-default value of the financial claim as
\begin{align}\label{CVA}
CVA:= U(t,x) -\bar{V}(t,x).
\end{align}
\section{Valuation}\label{Sec:Valuation}
\subsection{Closeout functions}
We denote the closeout payoff by a continuous function $f(t,x,y): \mathbb{R}_+\times\mathbb{R}^m\times\mathbb{R}\rightarrow\mathbb{R}$, where $t,x,y$ represent time, state value and the value of the claim respectively. We further assume that $f$ is locally H{\"o}lder continuous in $x$, uniformly in $t,y$.
To rule out the moral hazard in the financial claim, we suppose that
 the closeout payoff is no more than the value of the claim, i.e., $f(t,x,y)\le y, \forall (t,x,y)\in \mathbb{R}_+\times\mathbb{R}^m\times\mathbb{R}$ and that the closeout function $f$ and the loss function $y-f(t,x,y)$ are increasing in $y$, i.e.,
\begin{align}\label{MoralHazard}
0\le f(t,x,y_2)-f(t,x,y_1)\le y_2-y_1, \forall y_1<y_2, y_1,y_2\in\mathbb{R}, (t,x)\in \mathbb{R}_+\times\mathbb{R}^m.
\end{align}
Condition (\ref{MoralHazard}), which is referred to as incentive compatibility in the literature on (re)insurance,\footnote{See, for example, \cite{huberman1983optimal} and \cite{picard2000design}. Moreover, see \cite{chi2011optimal}, \cite{xu2019optimal} and \cite{tan2020optimal}, for the impact of condition (\ref{MoralHazard}) on the optimal (re)insurance contract design in static and dynamic models.}  links to the general revelation principle in economics\footnote{See, for example, \cite{myerson1979incentive} and \cite{dasgupta1979implementation}. }. This condition applies to most of closeout functions in credit modelling. In particular, it is easy to verify that all the closeout functions used in the calculations of CVA, DVA, FVA and XVA fall into this general category.
\subsection{Linear valuation with risk-free closeout}
As a benchmark, we briefly consider the case in which the risk-free value of the defaultable claim is used in the closeout function. We define the value of the defaultable claim with the risk-free closeout by $V_0$. The closeout function is then given by $f(t,x,U(t,x))$, where $U$ is the risk-free value of  the claim  given by (\ref{RiskFreeValue}).
Furthermore, $V_0$ can be obtained from (\ref{PreDefaultValue}) after the substitution
 $Z(t,x) = f(t,x,U(t,x))$; i.e.,
\begin{align}\label{PreDefaultValueRiskFree}
V_0(t,x) & = \mathbb{E} \left[\int^T_t(c(s,X_s)+\lambda(s,X_s)f(s,X_s,U(s,X_s)))e^{-\int^s_tr(u,X_u) +\lambda(u,X_u)du}ds \right. \nonumber\\& \left. \hspace*{0.4in} +e^{-\int^T_tr(s,X_s)+\lambda(s,X_s)ds}\phi(X_T)\vert X_t=x \right].
\end{align}
By denoting $\Pi_0(t,x)$ as the CVA calculated from the risk-free closeout, it follows from the definition of CVA (\ref{CVA}) that
\begin{align*}
\Pi_0(t,x)= U(t,x)-V_0(t,x).
\end{align*}
The following proposition demonstrates that the value of counterparty credit risk is positive.
\begin{proposition}\label{RiskFreeCVA}
$\Pi_0(t,x)\ge 0, \forall (t,x)\in[0,T]\times\mathbb{R}^m$.
\end{proposition}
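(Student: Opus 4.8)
The plan is to prove the equivalent statement $U(t,x)\ge V_0(t,x)$ by reinterpreting the risk-free value $U$ as the pre-default value of an auxiliary defaultable claim whose closeout amount is set equal to $U$ itself, and then exploiting (i) the monotonicity of the pre-default value (\ref{PreDefaultValue}) in the closeout amount and (ii) the no-arbitrage bound $f(t,x,y)\le y$.

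First I would observe that, under the standing assumptions ((\ref{Lip}), (\ref{Parabolic}), local H\"older continuity and polynomial growth (\ref{PolynomialGrowth})), the risk-free value $U$ in (\ref{RiskFreeValue}) is the $C^{1,2}$ solution of the linear Cauchy problem $\mathcal{L}U-rU+c=0$, $U(T,\cdot)=\phi$. Rewriting this as $\mathcal{L}U-(r+\lambda)U+(c+\lambda U)=0$ with the same terminal data and comparing it with the Feynman--Kac problem solved by the pre-default value (\ref{PreDefaultValue}) when the closeout amount is taken to be $Z\equiv U$, uniqueness for linear parabolic equations in the class of polynomially growing solutions (applicable here since $U$ grows at most polynomially, as in Proposition~\ref{Prop:PolynomialGrowth}) yields the identity
\[
U(t,x)=\mathbb{E}\!\left[\int_t^T\big(c(s,X_s)+\lambda(s,X_s)U(s,X_s)\big)e^{-\int_t^s (r+\lambda)(u,X_u)\,du}ds+e^{-\int_t^T(r+\lambda)(s,X_s)ds}\phi(X_T)\,\Big|\,X_t=x\right].
\]
Equivalently, this representation can be obtained directly by applying It\^o's formula to $s\mapsto e^{-\int_t^s (r+\lambda)(u,X_u)du}U(s,X_s)$ and discarding the martingale part using the moment bound (\ref{Moment}) and the polynomial growth of $U$.

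Subtracting (\ref{PreDefaultValueRiskFree}) from this identity, the running-coupon terms and the terminal terms cancel, and what remains is
\[
\Pi_0(t,x)=U(t,x)-V_0(t,x)=\mathbb{E}\!\left[\int_t^T\lambda(s,X_s)\big(U(s,X_s)-f(s,X_s,U(s,X_s))\big)e^{-\int_t^s(r+\lambda)(u,X_u)du}ds\,\Big|\,X_t=x\right].
\]
Since $\lambda\ge 0$, $r\ge 0$, and $f(t,x,y)\le y$ for every $(t,x,y)$ by the incentive-compatibility/no-arbitrage requirement imposed just before (\ref{MoralHazard}), the integrand is nonnegative almost surely, and hence $\Pi_0(t,x)\ge 0$. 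The expectations are finite and the rearrangements legitimate because $c$, $\lambda$, $\phi$ and $U$ (hence $f(\cdot,\cdot,U)$ via (\ref{MoralHazard})) have polynomial growth while $X$ obeys (\ref{Moment}).

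I expect the only genuinely technical point to be the $C^{1,2}$ regularity of $U$ together with the justification that the stochastic-integral term vanishes in expectation (equivalently, uniqueness in the class of polynomially growing solutions of the linear parabolic PDE); this is precisely where uniform parabolicity (\ref{Parabolic}), the Lipschitz and local H\"older assumptions, and the growth/moment estimates are used, and it is classical (see, e.g., \cite{pham2009continuous}, Chapter~1). Everything else reduces to the elementary observation that (\ref{PreDefaultValue}) is affine and nondecreasing in the closeout amount $Z$, since the hazard rate and the discount kernel are nonnegative, which immediately delivers $V_0\le U$ from $f(\cdot,\cdot,U)\le U$.
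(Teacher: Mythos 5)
Your argument is correct, and it reaches the same conclusion as the paper's proof but by a genuinely different route. The paper works entirely at the PDE level: it writes down the Cauchy problems for $U$ and $V_0$, subtracts them, uses $f(t,x,U)\le U$ to obtain the differential inequality $\mathcal{L}\Pi_0-(r+\lambda)\Pi_0\le 0$ together with $\Pi_0(T,\cdot)=0$, and then invokes the parabolic maximum principle. You instead lift $U$ to the same Feynman--Kac representation as $V_0$ by the standard trick of adding and subtracting $\lambda U$ in the PDE (equivalently, inserting $U$ itself as the closeout amount $Z$ in (\ref{PreDefaultValue})), then subtract the two probabilistic representations term by term to get the explicit formula
\[
\Pi_0(t,x)=\mathbb{E}\!\left[\int_t^T\lambda(s,X_s)\bigl(U(s,X_s)-f(s,X_s,U(s,X_s))\bigr)e^{-\int_t^s(r+\lambda)(u,X_u)\,du}\,ds\ \Big|\ X_t=x\right]\ge 0,
\]
with the sign read off pointwise from $\lambda\ge 0$ and $f\le\mathrm{id}$, so no maximum principle is needed. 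The two approaches rely on the same regularity ingredients ($C^{1,2}$ solvability and Feynman--Kac, hence uniform parabolicity, local H\"older continuity, and the polynomial growth/moment bounds), so neither is lighter on hypotheses; what your version buys is an exact integral representation of the risk-free-closeout CVA $\Pi_0$ as the discounted expected loss-given-default stream $\lambda\,(U-f(\cdot,\cdot,U))$, which is more informative than a sign and also makes the comparison with the replacement-closeout CVA in Proposition~\ref{PropMonotonicity} and Theorem~\ref{ThmCVA} transparent. What the paper's PDE/maximum-principle route buys is that the identical template is reused verbatim for the harder nonlinear comparison in Theorem~\ref{Main} and the bilateral Lemmas~\ref{Lem:MonotonicityStep1Bilateral}--\ref{Lem:UpperBoundBilateral}, where an explicit closed-form subtraction is not available.
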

\begin{proof}
As $c, \phi, r$ are locally H{\"o}lder continuous, $\{a_{ij}\}_{i,j\in\{1,2,\ldots,m\}}, b$ are Lipschitz continuous, $\mathcal{L}$ is uniformly parabolic and $U$ has polynomial growth (See Proposition \ref{Prop:PolynomialGrowth}.), then it follows from standard arguments in PDE\footnote{See, e.g., \cite{aronson1967parabolic}.} and the Feynman-Kac formula that $U\in C^{1,2}([0,T)\times\mathbb{R}^m)\cap C([0,T]\times\mathbb{R}^m)$ and solves the following Cauchy problem
\begin{align*}
\mathcal{L}U(t,x)-r(t,x)U(t,x)+c(t,x)&=0, \quad (t,x)\in[0,T)\times\mathbb{R},\\U(T,x) &= \phi(x), \quad x\in\mathbb{R}^m.
\end{align*}
Moreover, (\ref{MoralHazard}) implies that $f(t,x,y)$ has linear growth in $y$, uniformly in $(t,x)$ and thus together with the polynomial growth of $U$ in $x$, showing that $V_0$ has polynomial growth in $x$, $V_0\in C^{1,2}([0,T)\times\mathbb{R}^m)\cap C([0,T]\times\mathbb{R}^m)$ and satisfies
\begin{align*}
\mathcal{L}V_0(t,x)-(r(t,x)+\lambda(t,x))V_0(t,x)+c(t,x)+\lambda(t,x)f(t,x,U(t,x))&=0, \quad (t,x)\in[0,T)\times\mathbb{R}^m,\\V_0(T,x) &= \phi(x), \quad x\in\mathbb{R}^m
\end{align*}
As $\Pi_0 = U(t,x)-V_0(t,x)$, the difference of the above two PDEs yields
\begin{align*}
\mathcal{L}\Pi_0(t,x) - r(t,x)\Pi_0(t,x)+\lambda(t,x)(V_0(t,x)-f(t,x,U(t,x)))=0.
\end{align*}
As $f(t,x,U(t,x))\le U(t,x)$, we have that
\begin{align*}
\mathcal{L}\Pi_0(t,x) - r(t,x)\Pi_0(t,x)+\lambda(t,x)(V_0(t,x)-U(t,x))\le 0,
\end{align*}
which reduces to
\begin{align*}
\mathcal{L}\Pi_0(t,x) - (r(t,x)+\lambda(t,x))\Pi_0(t,x)\le 0.
\end{align*}
It is easy to see that $\Pi_0$ has polynomial growth and $\Pi_0(T,x)=0$, then it follows from the maximum principle\footnote{See, \citeauthor{friedman2008partial}, \citeyear{friedman2008partial}, chapter 2.} that $\Pi_0\ge 0$.\par
This completes the proof.
\end{proof}
\subsection{Nonlinear valuation with replacement closeout}
Under the replacement closeout convention, the closeout function $f$ depends on the pre-default value of the defaultable claim. Thus, the representation (\ref{PreDefaultValue}) of the pre-default value is no longer an explicit formula, rather it is a nonlinear equation of the pre-default value function.\par

Let  $V$ be the pre-default value obtained under the replacement closeout convention.
Then $V$ can be obtained from (\ref{PreDefaultValue}) after substituting
  $Z$ by $f(t,x,V(t,x))$; i.e.,
\begin{align}\label{PreDefaultEquation}
V(t,x) &= \mathbb{E}\left[\int^T_t(c(s,X_s)+\lambda(s,X_s)f(s,X_s,V(t,X_s)))e^{-\int^s_tr(u,X_u)+\lambda(u,X_u)du}ds \right. \nonumber\\ &\hspace*{0.4in} \left. +e^{-\int^T_tr(s,X_s)+\lambda(s,X_s)ds}\phi(X_T)\vert X_t=x \right]
\end{align}
To construct the solution to the valuation equation (\ref{PreDefaultEquation}), we define
\begin{align}\label{Iteration}
V^k(t,x) & = \mathbb{E}\left[\int^T_t(c(s,X_s)+\lambda(s,X_s)f(s,X_s,V^{k-1}(s,X_s)))e^{-\int^s_tr(u,X_u) +\lambda(u,X_u)du}ds \right. \nonumber\\ & \hspace*{0.4in} \left. +e^{-\int^T_tr(s,X_s)+\lambda(s,X_s)ds}\phi(X_T)\vert X_t=x\right],
\end{align}
where $V^0(t,x) = U(t,x).$\par
Similar to the proof of Proposition \ref{RiskFreeCVA}, it is easy to see that $V^k, k\ge 0,$ is well-defined and has polynomial growth in $x,$ uniformly in $t.$ The following proposition asserts that $\{V^k\}_{ k\ge 0}$ is a monotonically decreasing sequence of functions.\par
\begin{proposition}\label{PropMonotonicity}$V^k(t,x)\le V^{k-1}(t,x), \forall k\ge 1, (t,x)\in [0,T]\times\mathbb{R}^m$.
\end{proposition}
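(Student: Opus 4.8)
The plan is to prove the monotonicity $V^k \le V^{k-1}$ by induction on $k$, using the representation \eqref{Iteration} together with the comonotonicity condition \eqref{MoralHazard} and the positivity (maximum-principle) machinery already developed in the proof of Proposition \ref{RiskFreeCVA}. The base case $k=1$ amounts to showing $V^1 \le V^0 = U$. Since $V^1$ is exactly the risk-free-closeout value $V_0$ with $Z(t,x)=f(t,x,U(t,x))$, Proposition \ref{RiskFreeCVA} gives $U - V^1 = \Pi_0 \ge 0$ directly, so the base case requires no new work.

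For the inductive step, I would assume $V^k \le V^{k-1}$ and show $V^{k+1}\le V^k$. Both functions are, by the argument already invoked after \eqref{Iteration}, in $C^{1,2}([0,T)\times\mathbb{R}^m)\cap C([0,T]\times\mathbb{R}^m)$ with polynomial growth, and by Feynman–Kac they solve the linear Cauchy problems
\begin{align*}
\mathcal{L}V^{k+1}(t,x) - (r+\lambda)V^{k+1}(t,x) + c(t,x) + \lambda(t,x) f\bigl(t,x,V^k(t,x)\bigr) &= 0,\\
\mathcal{L}V^{k}(t,x) - (r+\lambda)V^{k}(t,x) + c(t,x) + \lambda(t,x) f\bigl(t,x,V^{k-1}(t,x)\bigr) &= 0,
\end{align*}
with terminal data $\phi(x)$. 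Subtracting and writing $D:=V^{k+1}-V^{k}$, the $c$-terms cancel and we obtain
\begin{align*}
\mathcal{L}D(t,x) - (r(t,x)+\lambda(t,x))D(t,x) + \lambda(t,x)\bigl(f(t,x,V^k(t,x)) - f(t,x,V^{k-1}(t,x))\bigr) = 0.
\end{align*}
By the induction hypothesis $V^k \le V^{k-1}$, and since $f$ is increasing in its last argument by \eqref{MoralHazard}, the difference $f(t,x,V^k)-f(t,x,V^{k-1}) \le 0$, so $\lambda\ge 0$ forces the zeroth-order source term to be $\le 0$, giving $\mathcal{L}D - (r+\lambda)D \ge 0$. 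Combined with $D(T,x)=\phi(x)-\phi(x)=0$ and the polynomial growth of $D$, the maximum principle (as cited in the proof of Proposition \ref{RiskFreeCVA}) yields $D\le 0$, i.e. $V^{k+1}\le V^k$.

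The only point that needs slight care — and the closest thing to an obstacle — is the justification that each $V^k$ has the regularity and growth needed to apply Feynman–Kac and the maximum principle; but this is exactly the observation already made in the text immediately before the proposition (``Similar to the proof of Proposition \ref{RiskFreeCVA}\ldots''), using the local Hölder continuity of $c,\lambda,Z,r$, the linear growth of $f$ in $y$ inherited from \eqref{MoralHazard}, the uniform parabolicity \eqref{Parabolic}, and the induction hypothesis that $V^k$ has polynomial growth in $x$. Granting that, the argument is a clean induction, and the sign condition $f(t,x,\cdot)$ increasing is the single structural ingredient that makes the iteration monotone.
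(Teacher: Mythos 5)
Your proof is correct, but it handles the inductive step by a genuinely different route than the paper. The paper's inductive step is a direct subtraction of the expectation representations in \eqref{Iteration}: for $w:=V^{m+2}-V^{m+1}$ the $c$-terms and the terminal $\phi$-term cancel, leaving
\begin{align*}
w(t,x)= \mathbb{E}\left[\int^T_t\lambda(s,X_s)\bigl(f(s,X_s,V^{m+1}(s,X_s))-f(s,X_s,V^{m}(s,X_s))\bigr)e^{-\int^s_t r(u,X_u)+\lambda(u,X_u)\,du}\,ds\,\vert\, X_t=x\right],
\end{align*}
and this is nonpositive simply because the integrand is: $\lambda\ge 0$, the exponential discount factor is positive, and $f(t,x,\cdot)$ is increasing so $V^{m+1}\le V^m$ gives a nonpositive difference. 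No PDE, regularity, or maximum principle is invoked in the inductive step. You instead pass to the Cauchy problems for $V^{k+1}$ and $V^k$ via Feynman--Kac and apply the parabolic maximum principle to $D=V^{k+1}-V^k$. This is also valid and consistent with the machinery the paper deploys in Proposition \ref{RiskFreeCVA} and Theorem \ref{Main}, but it needs the $C^{1,2}$ regularity and polynomial-growth bounds on each iterate that you rightly flag, whereas the paper's subtraction argument for this step needs only that the conditional expectations exist. Both arguments lean on the same structural ingredient — $f$ increasing in $y$ from \eqref{MoralHazard} together with $\lambda\ge 0$ — and both treat the base case identically via Proposition \ref{RiskFreeCVA}; the paper's version is lighter because it keeps the induction purely probabilistic.
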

\begin{proof}
It is easy to see that $V^1(t,x)=V_0(t,x),$ then it follows from Proposition \ref{RiskFreeCVA} that $V^1(t,x)\le V^0(t,x)$. Suppose that $V^{m+1}(t,x)\le V^{m}(t,x).$ Consider $w(t,x):=V^{m+2}(t,x)-V^{m+1}(t,x)$, then we have that
\begin{align*}
w(t,x)= \mathbb{E}\left[\int^T_t\lambda(s,X_s)(f(s,X_s,V^{m+1}(s,X_s))-f(s,X_s,V^{m}(s,X_s)))e^{-\int^s_tr(u,X_u) +\lambda(u,X_u)du}ds\vert X_t=x\right].
\end{align*}
Then $w(t,x)\le 0$ follows from the monotonicity of $f(t,x,y)$ in $y,$ which therefore shows that $V^k(t,x)$ is decreasing with respect to $k$ by mathematical induction.
\end{proof}\par
The following theorem demonstrates the solvability of the valuation equation (\ref{PreDefaultEquation}).
\begin{theorem}\label{Main}
With the notations above, there exists a unique $V\in C^{1,2}([0,T)\times\mathbb{R}^m)\cap C([0,T]\times\mathbb{R}^m)$ with polynomial growth, such that the valuation equation (\ref{PreDefaultEquation}) holds. Moreover, $V(t,x) = \lim_{k\to\infty}V^k(t,x), \forall (t,x)\in [0,T]\times\mathbb{R}^m.$
\end{theorem}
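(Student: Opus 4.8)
The plan is to carry out four steps. (i) Show the decreasing sequence $\{V^k\}$ of (\ref{Iteration}) is bounded below by a polynomially bounded function, hence converges; (ii) pass to the limit in (\ref{Iteration}) to see the limit solves (\ref{PreDefaultEquation}); (iii) bootstrap the limit to a $C^{1,2}$ solution; (iv) prove uniqueness. Everything hinges on the linear ``default operator''
\[
(\mathcal{A}\psi)(t,x):=\mathbb{E}\Big[\int_t^T\lambda(s,X_s)\,\psi(s,X_s)\,e^{-\int_t^s (r+\lambda)(u,X_u)\,du}\,ds\;\Big|\;X_t=x\Big],
\]
which is order-preserving and --- this is the technical core --- sends polynomially bounded functions to polynomially bounded functions whose iterates $\mathcal{A}^j$ are summable.

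First I would record that $0\le V^{j-1}-V^j$ by Proposition \ref{PropMonotonicity}, and moreover, for $j\ge2$, $V^{j-1}-V^j\le\mathcal{A}(V^{j-2}-V^{j-1})$: subtracting two consecutive instances of (\ref{Iteration}) and using that, by (\ref{MoralHazard}), $y\mapsto f(t,x,y)$ is nondecreasing with increments bounded by those of $y$, the constant and terminal terms cancel and only the $\lambda f$-term survives. Iterating, with $V^0-V^1=U-V_0=\Pi_0$ nonnegative (Proposition \ref{RiskFreeCVA}) and polynomially bounded (as in the proof of Proposition \ref{RiskFreeCVA}), gives $V^{j-1}-V^j\le\mathcal{A}^{\,j-1}\Pi_0\le\mathcal{A}^{\,j-1}h$ with $h(x):=L(1+|x|^p)\ge\Pi_0$. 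Now comes the key lemma, $\sum_{j\ge0}\mathcal{A}^j h<\infty$ with polynomial growth. Iterating the definition of $\mathcal{A}$, the Markov property and Fubini give
\[
(\mathcal{A}^j h)(t,x)=\mathbb{E}\Big[\int_{t<s_1<\cdots<s_j\le T}\!\prod_{i=1}^j\lambda(s_i,X_{s_i})\;h(X_{s_j})\;e^{-\int_t^{s_j}(r+\lambda)(u,X_u)\,du}\,ds_1\cdots ds_j\;\Big|\;X_t=x\Big];
\]
bounding $e^{-\int r}\le1$, integrating out $s_1,\dots,s_{j-1}$ so that the time-ordered integral of $\prod_{i<j}\lambda$ produces the factorial gain $\Lambda(s_j)^{j-1}/(j-1)!$ with $\Lambda(s):=\int_t^s\lambda(u,X_u)\,du$, and summing over $j$ via $\sum_{j\ge1}\Lambda^{j-1}/(j-1)!=e^{\Lambda}$ --- which exactly cancels the survival factor $e^{-\Lambda(s_j)}$ --- leaves $\sum_{j\ge1}\mathcal{A}^j h(t,x)\le\mathbb{E}\big[\int_t^T\lambda(s,X_s)h(X_s)\,ds\mid X_t=x\big]$, polynomially bounded by the polynomial growth (\ref{PolynomialGrowth}) of $\lambda$ and the moment bound (\ref{Moment}). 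Hence $V^0-V^k=\sum_{j=1}^k(V^{j-1}-V^j)$ stays below a fixed polynomially bounded function, so $\{V^k\}$, being monotone and uniformly polynomially bounded, converges pointwise to some $V$ of polynomial growth. Replacing the full series by its $k$-tail and applying the Markov inequality $e^{-\Lambda}\sum_{i\ge k-1}\Lambda^i/i!\le\Lambda/(k-1)$ shows $\sum_{j\ge k}\mathcal{A}^j h\to0$ uniformly on compacts, so in fact $V^k\to V$ uniformly on compacts and $V$ is continuous.

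Next I would pass to the limit in (\ref{Iteration}): $V^{k-1}(s,X_s)\to V(s,X_s)$ a.s., $f$ is continuous in $y$, and by (\ref{MoralHazard}) $|f(s,X_s,V^{k-1}(s,X_s))|\le|f(s,X_s,0)|+|V^{k-1}(s,X_s)|$ is dominated, uniformly in $k$, by a polynomially bounded function of $X_s$ (Step 1 yields uniform polynomial bounds on $V^k$); since $\lambda$ has polynomial growth and $X$ has all moments, dominated convergence gives (\ref{PreDefaultEquation}). For regularity, each $V^k$ is $C^{1,2}$ by induction: $V^0=U$ is $C^{1,2}$ as in Proposition \ref{RiskFreeCVA}; if $V^{k-1}\in C^{1,2}$ it is locally Lipschitz in $x$, and then $(t,x)\mapsto f(t,x,V^{k-1}(t,x))$ is locally H\"older in $x$ uniformly in $t$ --- estimate $|f(t,x_1,V^{k-1}(t,x_1))-f(t,x_2,V^{k-1}(t,x_2))|$ by adding and subtracting $f(t,x_2,V^{k-1}(t,x_1))$ and using the local H\"older continuity of $f$ in $x$ and its $1$-Lipschitz dependence on $y$ --- so Feynman--Kac and interior Schauder estimates give $V^k\in C^{1,2}$. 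For the limit itself, $V$ is continuous, polynomially bounded, and solves (\ref{PreDefaultEquation}), hence it is the stochastic representation of the solution of the Cauchy problem $\mathcal{L}V-(r+\lambda)V+c+\lambda f(\cdot,\cdot,V)=0$, $V(T,\cdot)=\phi$, whose source $c+\lambda f(\cdot,\cdot,V)$ is now a \emph{fixed} continuous, polynomially bounded function. Interior parabolic $L^p$ estimates put $V\in W^{2,1,p}_{\mathrm{loc}}$ for all $p$, so $V$ is locally H\"older in $x$ by Sobolev embedding; then the source is locally H\"older in $x$ uniformly in $t$, and Schauder theory upgrades $V$ to $C^{1,2}([0,T)\times\mathbb{R}^m)\cap C([0,T]\times\mathbb{R}^m)$.

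Finally, uniqueness: if $V,\tilde V$ are polynomially bounded solutions of (\ref{PreDefaultEquation}), then $w:=V-\tilde V$ satisfies $|w|\le\mathcal{A}|w|$ (again by the $1$-Lipschitz dependence of $f$ on $y$), hence $|w|\le\mathcal{A}^{\,j}|w|\le\mathcal{A}^{\,j}h$ for every $j$ and a polynomially bounded $h\ge|w|$; since $\sum_j\mathcal{A}^j h<\infty$, the general term $\mathcal{A}^{\,j}h\to0$, so $w\equiv0$, and $V=\lim_k V^k$ follows from the previous steps. The main obstacle is Step 1: because $\lambda$ is only polynomially, not uniformly, bounded, $\mathcal{A}$ is not a contraction, and the decay of $\mathcal{A}^j$ must be extracted from the factorial $1/(j-1)!$ produced by the time ordering rather than from any norm bound on $\mathcal{A}$; a secondary difficulty is transferring $C^{1,2}$ regularity to the limit, which forces the $L^p$-to-Schauder bootstrap above since the H\"older norms of the sources $\lambda f(\cdot,\cdot,V^{k-1})$ need not be controlled uniformly in $k$.
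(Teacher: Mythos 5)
Your proof is correct but takes a genuinely different route from the paper's in its two key technical steps. To establish the uniform polynomial bound on $\{V^k\}$, the paper constructs an explicit comparison function $J$ and shows $V^k\ge J$ by a PDE argument (Lemmas \ref{LemmaGrowth1}--\ref{LemmaGrowth}): it builds an auxiliary decreasing sequence $\{U_k\}$, exploits the consequence $f(t,x,y)\ge f(t,x,0)+y$ for $y\le0$ of (\ref{MoralHazard}) to derive a differential inequality, and invokes the maximum principle. You instead telescope $V^0-V^k=\sum_{j=1}^k(V^{j-1}-V^j)$, dominate each increment by $\mathcal{A}^{j-1}\Pi_0$, and sum the series through the time-ordered (Dyson) expansion: integrating out the ordered intermediate times produces the factorial $1/(j-1)!$ which, after summing over $j$, exactly annihilates the survival factor $e^{-\Lambda}$. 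This is a self-contained probabilistic argument, needs no auxiliary PDEs, gives a quantitative tail rate via the Poisson-tail bound, and — importantly — yields a purely probabilistic uniqueness proof ($|w|\le\mathcal{A}^j|w|\to0$) that is valid for any two continuous polynomially bounded solutions of (\ref{PreDefaultEquation}), whereas the paper's uniqueness proof reformulates $V-\tilde V$ as a solution of a linear PDE and applies the maximum principle, which presupposes $C^{1,2}$ regularity. Your regularity bootstrap is essentially the paper's, but the step ``interior parabolic $L^p$ estimates put $V\in W^{2,1,p}_{\mathrm{loc}}$'' is stated a bit quickly: to apply those estimates you must first know $V$ is a strong ($W^{2,1,p}_{\mathrm{loc}}$) solution of the linear problem with frozen source $c+\lambda f(\cdot,\cdot,V)$, which requires a short extra argument (existence and Feynman--Kac representation of the $W^{2,1,p}_{\mathrm{loc}}$ solution of the linear problem with merely continuous source); the paper instead sidesteps this by applying the interior $L^p$ estimates uniformly to the classical solutions $V^k$ and passing to the limit via Sobolev embedding and Arzel\`a--Ascoli, which is cleaner and you may want to adopt.
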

\begin{proof}
It follows from Lemma \ref{LemmaGrowth} (see Appendix) and Proposition \ref{PropMonotonicity} that there exist constants $C>0, n>1$, independent of $k$, such that
\begin{align}\label{EstimatesVk}
\vert V^k(t,x)\vert \le C(\vert x \vert^n+1) , \forall (t,x,k)\in\mathbb{R}_+\times\mathbb{R}^m\times\mathbb{N}.
\end{align}
Let $V(t,x) := \lim_{k\to\infty}V^k(t,x), \forall (t,x)\in [0,T]\times\mathbb{R}^m.$ Then the dominated convergence theorem and the regularity of $f$ yield that $V$ solves the valuation equation (\ref{PreDefaultEquation}) and has polynomial growth.\par
Moreover, as $c, \phi, r$ are locally H{\"o}lder continuous, $\{a_{ij}\}_{i,j\in\{1,2,\ldots,m\}}, b$ are Lipschitz continuous and $\mathcal{L}$ is uniformly parabolic, then it follows from standard arguments in PDE \cite[e.g.][]{aronson1967parabolic} and the Feynman-Kac formula that $V^k\in C^{1,2}([0,T)\times\mathbb{R}^m)\cap C([0,T]\times\mathbb{R}^m)$ and solves the following Cauchy problem
\begin{align*}
\mathcal{L}V^k(t,x)-(r(t,x)+\lambda(t,x))V^k(t,x) + \lambda(t,x)f(t,x,V^{k-1}(t,x))+c(t,x)&=0, \quad (t,x)\in[0,T)\times\mathbb{R}^m,\\V^k(T,x) &= \phi(x), \quad x\in\mathbb{R}^m.
\end{align*}
Let $E:=(0,T)\times (-M,M)^m, M>0$.  According to the interior $L^p$-estimates \citep[e.g.][chapter 7]{lieberman1996second} and the estimate (\ref{EstimatesVk}), we have that, for any $p>1$, there exists a constant $K$, independent of $k$, such that
\begin{align*}
\vert\vert D^2V^k\vert\vert_{p,E}+\left\vert\left\vert \frac{\partial V^k}{\partial t}\right\vert\right\vert_{p,E}\le K.
\end{align*}
Then it follows from the embedding theorem \citep[e.g.][chapter 7]{gilbarg2015elliptic} and the Arzela-Ascoli theorem that  $V\in C^{0,1}([0,T)\times\mathbb{R})^m$.\par
Consider the following standard linear PDE problem,
\begin{align*}
\mathcal{L}H(t,x)-(r(t,x)+\lambda(t,x))H(t,x) + \lambda(t,x)f(t,x,V(t,x))+c(t,x)&=0, \quad (t,x)\in[0,T)\times\mathbb{R}^m,\\H(T,x) &= \phi(x), \quad x\in\mathbb{R}^m.
\end{align*}
By the regularity of the coefficients of $\mathcal{L},$ $\lambda, r, c, f, V$, it is easy to see that the above PDE problem admits a unique classical solution $H$ with polynomial growth. Then the Feynman-Kac formula yields that
\begin{align*}
H(t,x) &= \mathbb{E}\left[\int^T_t(c(s,X_s)+\lambda(s,X_s)f(s,X_s,V(t,X_s)))e^{-\int^s_tr(u,X_u)+\lambda(u,X_u)du}ds\right. \nonumber\\& \hspace*{0.4in} \left. +e^{-\int^T_tr(s,X_s)+\lambda(s,X_s)ds}\phi(X_T)\vert X_t=x \right].
\end{align*}
In other words, we have $
H(t,x) =V(t,x)$
which suggests that $V$ is in $C^{1,2}([0,T)\times\mathbb{R}^m)\cap C([0,T]\times\mathbb{R}^m)$ and has polynomial growth.\par

Finally, we consider the uniqueness of the solution to the valuation equation (\ref{PreDefaultEquation}).
First, using the Feynman-Kac formula on (\ref{PreDefaultEquation}), we have that $V$ solves the following PDE problem
\begin{align}\label{PDECauchy}
\mathcal{L}V(t,x)-(r(t,x)+\lambda(t,x))V(t,x) + \lambda(t,x)f(t,x,V(t,x))+c(t,x)&=0, \quad (t,x)\in[0,T)\times\mathbb{R}^m,\\\label{BoundaryCauchy}V(T,x) &= \phi(x), \quad x\in\mathbb{R}^m.
\end{align}
Second, suppose that $\tilde{V}\in C^{1,2}([0,T)\times\mathbb{R}^m)\cap C([0,T]\times\mathbb{R}^m)$ solves (\ref{PreDefaultEquation}) and has polynomial growth. Let $h := V-\tilde{V},$ then it follows from (\ref{PDECauchy}) and (\ref{BoundaryCauchy}) that
\begin{align*}
\mathcal{L}h(t,x)-(r(t,x)+\lambda(t,x)-e(t,x))h(t,x)&=0, \quad (t,x)\in[0,T)\times\mathbb{R}^m,\\h(T,x) &= 0, \quad x\in\mathbb{R}^m,
\end{align*}
where $e(t,x):= \lambda(t,x)\frac{f(t,x,V(t,x))-f(t,x,\tilde{V}(t,x))}{V(t,x)-\tilde{V}(t,x)}1_{V(t,x)-\tilde{V}(t,x)\neq 0}.$\par
It follows from condition (\ref{MoralHazard}) that $0\le e(t,x)\le \lambda(t,x)$. Therefore, the uniqueness of the solution to the valuation equation (\ref{PreDefaultEquation}) is a result of the maximum principle \citep[eee, e.g.,][chapter 2]{friedman2008partial}.\par
This completes the proof.
\end{proof}\par
In related literature, \cite{kim2016pricing}, \cite{bichuch2018arbitrage} and \cite{brigo2019nonlinear} have focused on the unique solvability of the nonlinear valuation system. These literature usually assumes the boundedness of the hazard rates to guarantee the applicability of the contraction mapping, and thus excluding almost all commonly used stochastic intensity-based models. In contrast, our methodology does not require the boundedness assumption, and  hence allowing the analysis of valuation models with stochastic hazard rates.
\subsection{Risk-free versus replacement closeout: a consequence for valuation }
Analogously, we define the CVA calculated using the replacement closeout by $\Pi(t,x),$ i.e.,
\begin{align*}
\Pi(t,x)= U(t,x)-V(t,x).
\end{align*}
The following theorem verifies that the CVA based on risk-free closeout convention underestimates the corresponding value based on the replacement closeout.
\begin{theorem}\label{ThmCVA}
$\Pi_0(t,x)\le \Pi(t,x), \forall (t,x)\in [0,T]\times\mathbb{R}^m$.
\end{theorem}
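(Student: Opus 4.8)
Since $\Pi_0 = U - V_0$ and $\Pi = U - V$, the claimed inequality $\Pi_0(t,x)\le\Pi(t,x)$ is equivalent to the reverse inequality $V(t,x)\le V_0(t,x)$ for the pre-default values. So the whole statement reduces to comparing the replacement-closeout value $V$ with the risk-free-closeout value $V_0$, and the strategy is to exploit the iteration $\{V^k\}$ already built in \eqref{Iteration}.

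The first step is to recall, as noted in the proof of Proposition \ref{PropMonotonicity}, that $V^1(t,x) = V_0(t,x)$: indeed substituting $V^0 = U$ into \eqref{Iteration} reproduces exactly the expectation \eqref{PreDefaultValueRiskFree} defining $V_0$. The second step invokes Proposition \ref{PropMonotonicity}, which gives $V^{k}(t,x)\le V^{k-1}(t,x)$ for all $k\ge 1$; iterating this monotonicity from $k=1$ onward yields $V^{k}(t,x)\le V^{1}(t,x) = V_0(t,x)$ for every $k\ge 1$ and every $(t,x)\in[0,T]\times\mathbb{R}^m$. The third step passes to the limit: by Theorem \ref{Main}, $V(t,x) = \lim_{k\to\infty}V^k(t,x)$ pointwise, and since each term of the sequence is bounded above by $V_0(t,x)$, the limit inherits the bound, i.e.\ $V(t,x)\le V_0(t,x)$. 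Subtracting both sides from $U(t,x)$ then gives $\Pi_0(t,x) = U(t,x)-V_0(t,x)\le U(t,x)-V(t,x) = \Pi(t,x)$, which is the assertion.

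There is essentially no hard step here: everything needed has already been assembled in Proposition \ref{RiskFreeCVA} (which seeds $V^1\le V^0$), Proposition \ref{PropMonotonicity} (monotonicity of the iterates), and Theorem \ref{Main} (pointwise convergence $V^k\to V$). The only point that requires a moment's care is confirming the base identification $V^1 = V_0$, which is immediate from the definitions, and noting that the pointwise limit of functions all lying below a fixed function $V_0$ still lies below $V_0$ — no uniformity or dominated-convergence subtlety is needed for this particular monotone bound.

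\emph{Alternative.} One could instead argue directly at the PDE level: $V$ solves \eqref{PDECauchy}--\eqref{BoundaryCauchy} and $V_0$ solves the analogous Cauchy problem with $f(t,x,U(t,x))$ in place of $\lambda f(t,x,V(t,x))$; writing $\psi := V_0 - V$, subtracting the two equations, and using $f(t,x,V_0)\le f(t,x,U)$ together with the Lipschitz-in-$y$ bound \eqref{MoralHazard} to linearize the difference $f(t,x,V_0)-f(t,x,V)$ with a coefficient in $[0,\lambda]$, one obtains a linear inequality $\mathcal{L}\psi - (r+\lambda-e)\psi \le 0$ with $\psi(T,\cdot)=0$ and polynomial growth, and the maximum principle of \citeauthor{friedman2008partial} forces $\psi\ge 0$. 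This reproduces $V\le V_0$ and hence $\Pi_0\le\Pi$, but the iteration argument above is shorter given the results already in hand.
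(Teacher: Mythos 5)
Your argument is correct and is exactly the route the paper takes: the paper's proof of Theorem \ref{ThmCVA} simply cites Proposition \ref{PropMonotonicity} and Theorem \ref{Main}, which is precisely the chain $V^1 = V_0$, $V^k$ decreasing, $V = \lim_k V^k \le V_0$, and hence $\Pi_0 = U - V_0 \le U - V = \Pi$ that you spell out. The alternative maximum-principle argument you sketch is also sound, but the iteration argument is the one the paper intends.
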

\begin{proof}
The result is an immediate consequence of Proposition \ref{PropMonotonicity}  and Theorem \ref{Main}.
\end{proof}\par
Theorem \ref{ThmCVA} points out that neglecting the counterparty's creditworthiness in the recovery calculation is one of the critical sources of the underestimation of CVA. To provide additional insight on the severity of the underestimation, we proceed by valuing a defaultable European put option. In this example, we assume that the dynamics of the underlying asset is given by a one-dimensional geometric Brownian motion with $\mu(t,x) =rx,$ and $ \sigma(t,x) = \sigma x$ in (\ref{SDE}), where $r, \sigma$ are positive constants. We also set $\phi(x) = (K-x)^+, c(t,x)=0$ in (\ref{RiskFreeValue}), (\ref{PreDefaultValueRiskFree}), and (\ref{PreDefaultEquation}), where $K$ is a positive constant. Suppose further that $f(t,x,y) = Ry^+-y^-$ and $\lambda$ is a constant in (\ref{PreDefaultValueRiskFree}), and (\ref{PreDefaultEquation}), with $R$ being a positive constant ranging over $[0,1)$. Finally, we set the CVA calculated with the replacement closeout as an exact value and define the relative error attributed to the use of risk-free closeout by $e(t,x)$, i.e., $e(t,x) =\frac{\Pi(t,x)-\Pi_0(t,x)}{\Pi(t,x)}.$\par
Figure \ref{errorCVA}, which plots the relative error against the counterparty's hazard rate, illustrates that the underestimation is intensified with the worsening of the credit quality of the counterparty since higher hazard rate implies lower credit quality. For example, when  the counterparty's creditworthiness  is as low as that corresponds to 0.3 hazard rate, the degree of underestimation is as high as almost 40\%.  By and large, Theorem \ref{ThmCVA} and the numerical result depicted by Figure \ref{errorCVA} caution against the risk-free approximation in the recovery calculation and thus highlighting the importance of keeping track of the MtM value in the CVA calculation.

\begin{figure}[H]
\label{errorCVA}
\begin{center}
\includegraphics[width=1.0\textwidth]{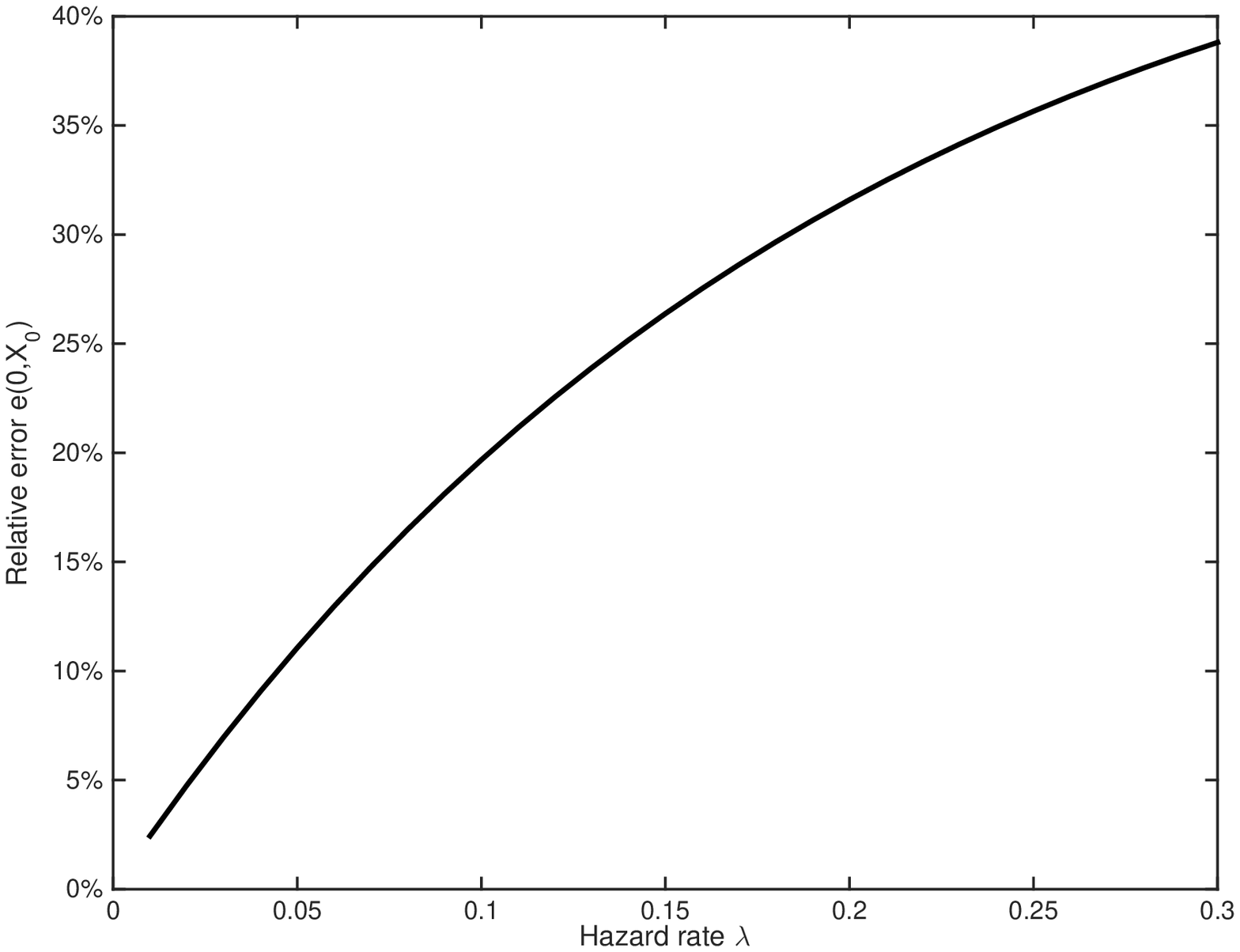}
\begin{minipage}{0.9\textwidth}
{\footnotesize \sf \singlespacing Fig. \ref{errorCVA}. The magnitude of the underestimate of CVA. \textit{Notes.} The figure plots the variation of relative error resulting from the neglect of the counterparty's creditworthiness with respect to the degree of hazard rate $\lambda$. The graph uses the parameters $r = 0.05, R = 0.5, \sigma=0.2, K=X_0=1, T=10.$
}
\end{minipage}
\end{center}
\end{figure}
\subsection{An extension: the bilateral case}
Our analysis so far has focused on the unilateral CVA for which only the counterparty's default is considered, we extend the theoretical results to the bilateral case in this subsection\footnote{See, for example, \cite{gregory2009being} and \cite{brigo2014arbitrage}, for the valuation of bilateral counterparty credit risk under the risk-free closeout convention.}.\par
In the bilateral case, in addition to the risk due to the potential default of the counterparty, the possibility of the investor's default is also considered. Let $\tau'$ be a non-negative random variable on $(\Omega,\mathcal{F},\mathbb{P})$ that represent the default time of the investor. Define the associated default process by $H'_t,$ i.e., $H'_t=1_{\tau'\le t}$. At time $t$, the conditional survival probability of the investor with respect to the accessible information $\mathcal{F}_t$ is given by $\mathbb{P}(\tau'>t\vert\mathcal{F}_t):=e^{-\int^t_0\bar{\lambda}(s,X_s)ds}$, where $\bar{\lambda}: \mathbb{R}_+\times\mathbb{R}^m\rightarrow\mathbb{R}_+,$ satisfies the same growth and regularity conditions as $\lambda.$ Following the standard literature on the hazard rate models \citep[e.g.,][ chapter 9]{bielecki2013credit}, we suppose that $\tau$ and $\tau'$ are conditionally independent with respect to $\{\mathcal{F}_t\}_{t\ge 0}$, i.e., $\mathbb{P}(\tau>t,\tau'>t\vert\mathcal{F}_t) = \mathbb{P}(\tau>t\vert\mathcal{F}_t) \mathbb{P}(\tau'>t\vert\mathcal{F}_t)=e^{-\int^t_0\lambda(t,X_s)+\bar{\lambda}(s,X_s)ds}.$\par
Denote the closeout function for the investor by $\bar{f}(t,x,y): \mathbb{R}_+\times\mathbb{R}^m\times\mathbb{R}\rightarrow\mathbb{R}$, where $\bar{f}$ is locally H{\"o}lder continuous in $x$, uniformly in $t,y$. Suppose that $\bar{f}(t,x,y)\ge y, \forall (t,x,y)\in \mathbb{R}_+\times\mathbb{R}^m\times\mathbb{R}$. In order to rule out the moral hazard, we similarly assume that $\bar{f}$ satisfies inequality (\ref{MoralHazard}). A specific form of $\bar{f}$ is $y^+-R'y^-, R'\in[0,1)$, which is used as a closeout function in the calculation of DVA. Also, it is  immediate to see that our closeout functions are sufficiently general to accommodate different specifications of cash flows that arises in the studies of FVA and XVA \citep[see, e.g.,][]{kim2016pricing,brigo2019nonlinear}.\par
Let us consider the following discounted payoff process of a defaultable claim
\begin{align}\label{CashFlowBilateral}
\Upsilon_t &=\int^{\tau\wedge\tau'\wedge T}_te^{-\int^s_t r(u,X_u)du}c(s,X_s)ds + 1_{t<\tau\le\tau'\le T}\chi_1(\tau,X_{\tau})e^{-\int^{\tau}_t r(s,X_s)ds}\nonumber\\& +1_{t<\tau'<\tau\le T}\chi_2(\tau',X_{\tau'})e^{-\int^{\tau'}_t r(s,X_s)ds}+ \phi(X_T)e^{-\int^T_t r(s,X_s)ds} 1_{\tau\wedge\tau'>T},
\end{align}
where $\chi_1$ ($\chi_2$) denotes the recovery when the counterparty (investor) defaults.\par
Let the pre-default value of the financial claim with replacement closeout be defined by  $\Psi: \mathbb{R}_+\times\mathbb{R}^m\rightarrow\mathbb{R}$ and let $\chi_1(\tau,X_{\tau})=f(\tau,X_{\tau},\Psi(\tau,X_{\tau}))$ and $\chi_2(\tau',X_{\tau'})=\bar{f}(\tau',X_{\tau'},\Psi(\tau',X_{\tau'}))$ in (\ref{CashFlowBilateral}). Then it follows from the standard argument \citep[e.g.][chapter 9]{bielecki2013credit} on credit risk that the pre-default value $\Psi$ is given by 
\begin{align}\label{PreDefaultEquationBilateral}
\Psi(t,x) &= \mathbb{E}\left[\int^T_t(c(s,X_s)+\lambda(s,X_s)f(s,X_s,\Psi(t,X_s))\right. \nonumber\\
&+\bar{\lambda}(s,X_s)\bar{f}(s,X_s,\Psi(t,X_s)))e^{-\int^s_tr(u,X_u)+\lambda(u,X_u)+\bar{\lambda}(u,X_u)du}ds\nonumber\\
& \left. +e^{-\int^T_tr(s,X_s)+\lambda(s,X_s)+\bar{\lambda}(s,X_s)ds}\phi(X_T)\vert X_t=x\right].
\end{align}
We close this subsection by studying the solvability of the valuation equation (\ref{PreDefaultEquationBilateral}) and the impact of the replacement closeout on the valuation. To this end, we set a benchmark as follows:
\begin{align}\label{BenchmarkBilateral}
\Psi_0(t,x) &= \mathbb{E}\left[\int^T_t(c(s,X_s)+\lambda(s,X_s)f(s,X_s,V(t,X_s))\right.\nonumber\\
&+\bar{\lambda}(s,X_s)\bar{f}(s,X_s,V(t,X_s)))e^{-\int^s_tr(u,X_u)+\lambda(u,X_u)+\bar{\lambda}(u,X_u)du}ds\nonumber\\
&\left. +e^{-\int^T_tr(s,X_s)+\lambda(s,X_s)+\bar{\lambda}(u,X_u)ds}\phi(X_T)\vert X_t=x\right].
\end{align}\par
\begin{theorem}\label{MainBilateral}
With the notations above, there exists a unique $\Psi\in C^{1,2}([0,T)\times\mathbb{R}^m)\cap C([0,T]\times\mathbb{R}^m)$ with polynomial growth, such that the valuation equation (\ref{PreDefaultEquationBilateral}) holds. Moreover, $\Psi(t,x)\ge \Psi_0(t,x), \forall (t,x)\in [0,T]\times\mathbb{R}^m.$
\end{theorem}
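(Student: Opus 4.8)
The plan is to follow the proofs of Theorems \ref{Main} and \ref{ThmCVA} almost verbatim, with the unilateral pre-default value $V$ now playing the role that the risk-free value $U$ played there. Define $\{\Psi^k\}_{k\ge0}$ by the analogue of the iteration (\ref{Iteration}) in which the source $c+\lambda f(\cdot,V^{k-1})$ is replaced by $c+\lambda f(\cdot,\Psi^{k-1})+\bar\lambda\bar f(\cdot,\Psi^{k-1})$ and the effective discount rate $r+\lambda$ by $r+\lambda+\bar\lambda$, with $\Psi^0:=V$. Exactly as in the discussion preceding Proposition \ref{PropMonotonicity}, each $\Psi^k$ is well defined, has polynomial growth in $x$ uniformly in $t$, lies in $C^{1,2}([0,T)\times\mathbb{R}^m)\cap C([0,T]\times\mathbb{R}^m)$, and, by Feynman--Kac, solves the linear Cauchy problem with zeroth-order coefficient $r+\lambda+\bar\lambda$ and source $c+\lambda f(t,x,\Psi^{k-1})+\bar\lambda\bar f(t,x,\Psi^{k-1})$; by construction $\Psi^1$ is precisely the benchmark $\Psi_0$ of (\ref{BenchmarkBilateral}).

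The first step is to show that $\{\Psi^k\}$ is monotonically \emph{increasing}. For the base step, put $d:=\Psi^1-\Psi^0=\Psi_0-V$. Subtracting the Cauchy problem (\ref{PDECauchy})--(\ref{BoundaryCauchy}) satisfied by $V$ from the one satisfied by $\Psi_0$ and simplifying gives
\[
\mathcal{L}d(t,x)-\big(r(t,x)+\lambda(t,x)+\bar\lambda(t,x)\big)d(t,x)=\bar\lambda(t,x)\big(V(t,x)-\bar f(t,x,V(t,x))\big),\qquad d(T,x)=0 .
\]
Since $\bar f(t,x,y)\ge y$, the right-hand side is nonpositive, so $d$ has polynomial growth, vanishes at $T$, and is a subsolution; the maximum principle, used exactly as in Proposition \ref{RiskFreeCVA}, yields $d\ge0$, i.e.\ $\Psi^1\ge\Psi^0$. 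For the induction step, assuming $\Psi^k\ge\Psi^{k-1}$, the difference $\Psi^{k+1}-\Psi^k$ equals the conditional expectation of $\int^T_t\big[\lambda(f(\cdot,\Psi^k)-f(\cdot,\Psi^{k-1}))+\bar\lambda(\bar f(\cdot,\Psi^k)-\bar f(\cdot,\Psi^{k-1}))\big]e^{-\int^s_t(r+\lambda+\bar\lambda)du}\,ds$ given $X_t=x$, which is nonnegative because $f$ and $\bar f$ are both nondecreasing in their last argument by (\ref{MoralHazard}). Hence $\Psi^k\uparrow$, and in particular $\Psi^k\ge\Psi^1=\Psi_0$ for all $k\ge1$.

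The second step is a polynomial bound on $\Psi^k$ that is uniform in $k$. This follows from the bilateral analogue of Lemma \ref{LemmaGrowth} together with the monotonicity just proved, exactly as the estimate (\ref{EstimatesVk}) was obtained; the lemma applies unchanged since, by (\ref{MoralHazard}) applied to $f$ and to $\bar f$, the source $\lambda f+\bar\lambda\bar f$ has the same linear growth in $y$, uniformly in $(t,x)$, as $\lambda f$ alone. Granting this, set $\Psi:=\lim_{k\to\infty}\Psi^k$. Dominated convergence and the continuity of $f,\bar f$ show that $\Psi$ solves the valuation equation (\ref{PreDefaultEquationBilateral}) and has polynomial growth, and $\Psi\ge\Psi_0$ follows from $\Psi^k\ge\Psi_0$; this is the comparison claimed in the theorem. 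The $C^{1,2}$ regularity is then obtained verbatim as in Theorem \ref{Main}: interior $L^p$-estimates uniform in $k$ give $\Psi\in C^{0,1}([0,T)\times\mathbb{R}^m)$ via the Sobolev embedding and the Arzel\`a--Ascoli theorem, and then, $f(\cdot,\Psi)$ and $\bar f(\cdot,\Psi)$ being locally H\"older continuous, the linear Cauchy problem with source $c+\lambda f(\cdot,\Psi)+\bar\lambda\bar f(\cdot,\Psi)$ has a unique classical solution with polynomial growth which, by Feynman--Kac, equals $\Psi$.

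For uniqueness, if $\tilde\Psi$ is another solution in the stated class, then $h:=\Psi-\tilde\Psi$ solves
\[
\mathcal{L}h(t,x)-\big(r(t,x)+\lambda(t,x)+\bar\lambda(t,x)-e(t,x)-\bar e(t,x)\big)h(t,x)=0,\qquad h(T,x)=0,
\]
where $e:=\lambda\,\frac{f(t,x,\Psi)-f(t,x,\tilde\Psi)}{\Psi-\tilde\Psi}1_{\Psi\neq\tilde\Psi}$ and $\bar e:=\bar\lambda\,\frac{\bar f(t,x,\Psi)-\bar f(t,x,\tilde\Psi)}{\Psi-\tilde\Psi}1_{\Psi\neq\tilde\Psi}$. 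By (\ref{MoralHazard}) applied to $f$ and to $\bar f$, one has $0\le e\le\lambda$ and $0\le\bar e\le\bar\lambda$, so $r+\lambda+\bar\lambda-e-\bar e\ge r\ge0$ is bounded below, and the maximum principle forces $h\equiv0$. I expect the main obstacle to be the uniform-in-$k$ estimate: this is precisely the bilateral situation in which $f$ (pushing the value down) and $\bar f$ (pushing it up) act against each other, so the iteration must be started at $\Psi^0=V$ in order to retain monotonicity, and a two-sided a priori bound with constants independent of $k$ still has to be extracted without assuming $\lambda,\bar\lambda$ bounded, which is exactly the purpose of the appendix lemma.
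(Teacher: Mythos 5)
Your proof is correct and follows essentially the same iterative scheme as the paper, which relegates the work to Lemmas \ref{Lem:MonotonicityStep1Bilateral}, \ref{Lem:MonotonicityBilateral}, and \ref{Lem:UpperBoundBilateral}; the only cosmetic difference is that you start the iteration at $\Psi^0=V$ (so your base step $\Psi^1\ge\Psi^0$ reproduces the paper's Lemma \ref{Lem:MonotonicityStep1Bilateral}), whereas the paper sets $\Psi^0=\Psi_0$ and invokes that lemma to start the induction. The one imprecision is the claim that Lemma \ref{LemmaGrowth} ``applies unchanged'': since $\{\Psi^k\}$ is now \emph{increasing}, the uniform-in-$k$ bound that is needed is an upper bound (Lemma \ref{Lem:UpperBoundBilateral}), which the paper extracts from $f(t,x,y)\le y$ and the monotonicity of $y-\bar f(t,x,y)$ --- a mirrored, not identical, argument to the lower bound $J$ used in the unilateral case, though your closing remark shows you correctly identified this as the crux.
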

\begin{proof}
Using Lemmas \ref{Lem:MonotonicityStep1Bilateral}, \ref{Lem:MonotonicityBilateral} and \ref{Lem:UpperBoundBilateral}, the proof is essentially the same as the proofs of Theorems \ref{Main} and \ref{ThmCVA}. Thus we omit it.
\end{proof}\par
Theorem \ref{MainBilateral}, which extends Theorem \ref{Main} to the bilateral case, demonstrates the unique solvability of the valuation equation (\ref{PreDefaultEquationBilateral}). Also, comparing (\ref{PreDefaultEquationBilateral}) and  (\ref{BenchmarkBilateral}), one can see that the pricing formula (\ref{BenchmarkBilateral}) ignores the investor's creditworthiness in the recovery calculation. Hence, Theorem \ref{MainBilateral} shows that the neglecting the investor's creditworthiness in the recovery calculation would result in overestimating the counterparty credit risk.
\section{Algorithms}\label{Sec:Algorithm}
In this section, we introduce a deep learning-based algorithm for computing CVA. We work within the Deep BSDE framework established by \cite{weinan2017deep}.  While \cite{weinan2017deep} treat the gradient of the value function as a sequence of state functions at different time nodes, and thus use a sequence of different neural networks to approximate the gradient, the algorithm  proposed in this paper, on the other hand, considers the gradient to be a function of time and states and thus allows us to use one unified neutral network to approximate the gradient.
\subsection{The BSDE formulation}
As CVA is defined by the difference between the risk-free value and the pre-default value of a financial claim, and the risk-free value is a special case of the pre-default value with $\lambda(t,X_t)\equiv 0$, it suffices to focus on the valuation equation (\ref{PreDefaultEquation}) of the pre-default value of the financial claim.
An immediate result of Theorem \ref{Main} and Feynman-Kac formula is that $V$, defined by the valuation equation (\ref{PreDefaultEquation}), is the classical solution to the following PDE problem:
\begin{align}\label{PDECauchy1}
\mathcal{L}V(t,x)-(r(t,x)+\lambda(t,x))V(t,x) + \lambda(t,x)f(t,x,V(t,x))+c(t,x)&=0, \quad (t,x)\in[0,T)\times\mathbb{R}^m,\\\label{BoundaryCauchy2}V(T,x) &= \phi(x), \quad x\in\mathbb{R}^m.
\end{align}
It follows from  \cite{el1997backward} that the nonlinear PDE problem, which comprises (\ref{PDECauchy1}) and (\ref{BoundaryCauchy2}), is related to a BSDE in the sense that\footnote{See, e.g., \cite{crepey2015bilateral}, \cite{crepey2015bilateralb} and  \cite{bichuch2018arbitrage}, for more discussion about the BSDE formulation for the valuation of credit risk. }
\begin{align}\label{BSDE}
V(t,X_t) = \phi(X_T)+\int^T_tF(s,X_s,V_s)ds-\int^T_t(\sigma(s,X_s)^T\nabla_x V(s,X_s))^TdW_s.
\end{align}
Here $\{X_t\}_{t\in[0,T]}$ is given by the SDE (\ref{SDE}) and $F(t,x,y)=c(t,x)+\lambda(t,x)f(t,x,y)-(r(t,x)+\lambda(t,x))y$.\par
\subsection{The stochastic control problem }
We apply a time discretization to (\ref{SDE}) and (\ref{BSDE}). More specifically, let $0=t_0<t_1<t_2<...<t_N=T$ and denote $\Delta W_i = W_{t_{i+1}}-W_{t_i}, \Delta t_i=t_{i+1}-t_i, i=0,1,2,...,N-1.$ Then applying the Euler-Maruyama scheme results in
\begin{align}\label{DiscretizedSDE}
 X_{t_{i+1}} \approx X_{t_i} + \mu(t_i, X_{t_i}) \Delta t_i + \sigma(t_i, X_{t_i})\Delta W_i, \quad X_0=x
\end{align}
\begin{align}\label{DiscretizedBSDE}
V_{t_{i+1}}  &\approx V_{t_i} - F(t_i, X_{t_i}, V_{t_i}) \Delta t_i+ \nabla_x V(t_i, X_{t_i})^T \sigma(t_i, X_{t_i})\Delta W_i, \quad V_0 = v.
\end{align}
Taking $\nabla V$ and $V_0$ as decision variables, we then obtain the stochastic control problem
\begin{align}\label{Pr:Control}
\min_{v\in\mathbb{R}^m,\{z_i\}_{i=0,1,2,...,N-1}, z_i\in\mathbb{R}^m}\mathbb{E}\vert V_{t_N}-\phi(X_{t_N})\vert^2,
\end{align}
such that
\begin{align}\label{BSDE:Control}
V_{t_{i+1}}  &= V_{t_i} - F(t_i, X_{t_i}, V_{t_i}) \Delta t_i+ z_i^T \sigma(t_i, X_{t_i})\Delta W_i, \quad V_0 = v,
\end{align}
where $X_{t_i}, i=0,1,2,...,N-1$ is given by (\ref{DiscretizedSDE}).\par
In order to solve the stochastic control problem (\ref{Pr:Control}), we use Monte Carlo method to simulate $L$ paths $(X^l_{t_i},V^l_{t_i})_{i=0,1,2,...,N-1}, l=1,2,3...,L,$ using (\ref{DiscretizedSDE}) and (\ref{BSDE:Control}). Then the control problem (\ref{Pr:Control}) reads
\begin{align}\label{ObjectiveFunctional}
\min_{v\in\mathbb{R}^m,\{z_i\}_{i=0,1,2,...,N-1}, z_i\in\mathbb{R}^m}\frac{1}{L}\sum_{l=1}^L( V^l_{t_N}-\phi(X^l_{t_N}))^2,
\end{align}
where $(X^l_{t_N},V^l_{t_N}), l=1,2,3...,L,$ are the simulated  sample paths correspond to (\ref{DiscretizedSDE}) and  (\ref{BSDE:Control}).
\subsection{The single network-based algorithm }
Motivated by \cite{raissi2018forward}, we propose a single network-based algorithms for solving the stochastic control problem (\ref{Pr:Control}). We consider $\nabla_x V$ to be a function of $(t,x)$ and approximate it by a network $\mathcal{N}$, i.e.,
\begin{align}\label{Network}
\nabla_xV(t_i,X_{t_i})\thickapprox \mathcal{N}(t_i,X_{t_i}\vert \theta), 0\le i\le N-1,
\end{align}
where $\theta=\{\theta_j\}_{j=1,2,3...,J}$ denotes the trainable parameters in the network.\par
Equipping the Discretized BSDE (\ref{DiscretizedBSDE}) with the neural network, we have 
\begin{align}\label{NetworkBSDE}
\mathcal{V}^{\theta}_{t_{i+1}}  = \mathcal{V}^{\theta}_{t_i} - F(t_i, X_{t_i}, \mathcal{V}^{\theta}_{t_i}) \Delta t_i+  \mathcal{N}(t_i,X_{t_i})^T \sigma(t_i, X_{t_i})\Delta W_i, \quad \mathcal{V}^{\theta}_0 = v.
\end{align}
Replacing the objective function (\ref{Pr:Control}), we define the loss function used in the Deep BSDE method by $\iota(y,\theta),$ i.e.,
\begin{align}
\iota(v,\theta) = \mathbb{E}\vert \mathcal{V}^{\theta}_{t_N}-\phi(X_{t_N})\vert^2.
\end{align}
We then obtain the control problem considered in the Deep BSDE method
\begin{align}\label{Pr:ControNetwork}
\min_{v\in\mathbb{R}^m,\theta\in\mathbb{R}^J }\iota(v,\theta),
\end{align}
subject to (\ref{DiscretizedSDE}) and (\ref{NetworkBSDE}).\par
Having obtained the control problem (\ref{Pr:ControNetwork}), we are poised to use the a stochastic gradient descent-type (SGD) algorithm to seek the optimal solution. The algorithm is summarized as follows.

\begin{algorithm}[H]
\caption{The single network-based Deep BSDE solver}\label{alg iteration}
\begin{algorithmic}[1]
\Function{DBSDE}{$N$, $L$, SDE, BSDE}: \Comment{$N$ time steps, $L$ paths for Monte Carlo loop, SDE defined by (\ref{DiscretizedSDE})  and BSDE defined by (\ref{NetworkBSDE}).}
\State Fix the neural network architecture $\mathcal{N}$.
\State Initialize trainable parameters $(y,\theta)$.
\For{$i = 1$ to $n$}: \Comment{$n$ iterations for training the neural network.}
\State Simulate $L$ paths of $X$ and $\mathcal{V}$ by SDE and BSDE respectively.
\State Replace $\iota(v,\theta)$ by $\frac{1}{L}\sum_{l=1}^L( \mathcal{V}^{\theta,l}_{t_N}-\phi(X^l_{t_N}))^2.$ \Comment{ $(X^l,\mathcal{V}^{\theta,l})_{1\le l\le L}$ are the sample paths of  $X$ and $\mathcal{V^{\theta}}$.}
\State Compute the loss
\begin{align*}
    \frac{1}{L}\sum_{l=1}^L( \mathcal{V}^{\theta,l}_{t_N}-\phi(X^l_{t_N}))^2,
\end{align*}
\State Reduce the loss by an SGD algorithm and update $(y_i,\theta_i)$.
\EndFor
\State \Return updated parameters $(y^*,\theta^*).$ \Comment{ $(y^*,\theta^*)$ are all parameters of the trained network $\mathcal{N}^*$. Thus, the algorithm returns the trained network $\mathcal{N}^*$.}
\EndFunction
\end{algorithmic}
\end{algorithm}
In practice, in order to increase the reliability of the algorithm, one usually runs independently the algorithm many times with different initial values and then takes the average as the value function. Using this idea, we obtain the algorithms for the pre-default value of a defaultable claim with the replacement closeout and corresponding CVA as follows.
\begin{algorithm}[H]
\caption{The algorithm for the pre-default value of a defaultable claim with the replacement closeout}\label{alg replacement}
\begin{algorithmic}[1]
    \Function{VALUE\_REPLACEMENT}{$N$, $L$, SDE, BSDE, $M$}:  \Comment{$N$ time steps, $L$ paths for Monte Carlo loop, SDE defined by (\ref{DiscretizedSDE}),  BSDE defined by (\ref{NetworkBSDE}) and $M$ independent trials.}
    \For{$i=1$ to $M$}:
    \State Import the function from Algorithm \ref{alg iteration} and set up the initial value of trainable parameters $(y_i,\theta_i)$.
    \State $(y_i^*,\theta_i^*)\leftarrow$ DBSDE($N$, $L$, SDE, BSDE).
    \EndFor
    \State \Return $V(t_0, X_{t_0}) \leftarrow \frac{1}{M}\sum_{i=1}^My^*_i$. \Comment{ $V$ is the pre-default value of the defaultable claim with the replacement closeout.}
    \EndFunction
\end{algorithmic}
\end{algorithm}

\begin{algorithm}[H]
\caption{The CVA solver}\label{alg CVA}
\begin{algorithmic}[1]
    \Function{CVA}{$N$, $L$, SDE, BSDE, BSDE\_U, $M$}:\Comment{$N$ time steps, $L$ paths for Monte Carlo loop, SDE defined by (\ref{DiscretizedSDE}), BSDE defined by (\ref{NetworkBSDE}), BSDE\_U defined by (\ref{NetworkBSDE}) with $\lambda(t,x)\equiv 0$ and $M$ independent trials.}

    \State Import the function from Algorithm \ref{alg replacement}.
    \State $y_{0}^*\leftarrow$ VALUE\_REPLACEMENT($N$, $L$, SDE, BSDE, M).\Comment{$y_0$ is the pre-default value of the defaultable claim with the replacement closeout at $(t_0,X_{t_0})$.}
       \State $y_{1}^*\leftarrow$ VALUE\_REPLACEMENT($N$, $L$, SDE, BSDE\_U, $M$).\Comment{$y_1$ is the value of the risk-free counterpart of the defaultable claim at $(t_0,X_{t_0})$.}

    \State \Return $c\leftarrow$ $y_{1}^*-y_0^*$.\Comment{$c$ is the CVA for the defaultable claim with the replacement closeout. }
    \EndFunction
\end{algorithmic}
\end{algorithm}
\section{Numerical Results and Discussion}\label{sec:NRD}
In this section, we test our algorithm on a defaultable European option by assuming $\phi(x) = (dK-\sum_{i=1}^dx_i)^+, \lambda(t,x) = \lambda, c(t,x) = 0, r(t,x)=r, f(t,x,y) = Ry^+-y^-,$ where $K, \lambda, R$ are positive constants. Moreover, we suppose the underlying asset follows a multi-dimensional Geometric Brownian motion,
\begin{align*}
dX_{it} = \mu_iX_{it}dt+\sigma_{i}X_{it}dW_{it}, i = 1,2,\dots,d,
\end{align*}
where $\mu_i, \sigma_i$ are constants and $W= (W_1,W_2,\ldots,W_d)^T$  is a $d$-dimensional Brownian motion.\par
The single network-based algorithm is tested with the stacked LSTM architecture with $3$ hidden layers (all $d+10$ dimensions). We choose the hyperbolic tangent function as the activation function and use the Adam optimizer \citep{kingma2014adam} for the SGD process. We implement our algorithm on the defaultable option with $N=100$ (number of time steps), $L=64$ (number of paths for the Monte Carlo loop). All tests in this section are run on a PC with a 2.80GHz Intel Core i5-8400 CPU and an NVIDIA GeForce RTX 2060 GPU using Google TensorFlow in Python.
\subsection{CVA}
The nonlinearity resulting from the replacement closeout is regarded as a major obstacle to the application, especially when the dimension (the number of the risk factors) of the valuation model is high. Because of the power of the deep learning technique, we are now able to calculate the CVA with replacement closeout in high dimensional contexts. Table \ref{TableCVA} lists the CVA for the defaultable European option over different values of dimensional case $d$, where $d = 5,10,20,50,100.$
\begin{table}[H] \label{TableCVA}
               \begin{center}
           \begin{tabular}{|c|c|c|c|c|c|}
\hline
Dimension   & 5    & 10   & 20   & 50   &100 \\ \hline
CVA & $0.04487$  & $0.0876$  & $0.1794$  & $0.4495$  & $0.8288$ \\ \hline
\end{tabular}
       \vspace{0.5cm}
\begin{minipage}{0.9\textwidth}
{\footnotesize \sf \singlespacing Table \ref{TableCVA}. The CVA for the defaultable European put option. \textit{Notes.} The table compares the CVA for the defaultable European put option, calculated with the single LSTM DBSDE solver in the present paper. The table uses the parameters $r = 0.03, \mu=0.05, \sigma=0.2, \lambda=0.1, R = 0.4, K=1, X_{i0}=0.8, i=1,2,\ldots,d, T=1$. The number of the independent trials is $M=5$.
}
\end{minipage}
\end {center}
\end{table}

\subsection{Comparison with the multi-FC DBSDE solver (\citeauthor{weinan2017deep} \citeyear{weinan2017deep})}
In this subsection, we show the accuracy of our algorithm. To this end, we compare our algorithm with the multi-FC DBSDE solver \citep{weinan2017deep}, whose validation for the valuation of defaultable claims has been examined by \cite{gnoatto2020deep}. While the results obtained from multi-FC DBSDE serve as a benchmark, Table~\ref{ComparisonValues} demonstrates the ability of the single network-based algorithm in providing accurate numerical approximations even for dimension as high as 100. \par
Figure \ref{Convergence} compares the algorithm in the present paper with the mult-FC DBSDE solver in terms of the iteration number required for the convergence in the $100$-dimensional case. The left panel shows that the single LSTM DBSDE solver requires fewer iterations to reach the convergence than the mult-FC DBSDE solver. The right panel analyzes the approximations in the convergence state. The shaded areas depict the means and the ranges of the 20 independent runs of the two algorithms. This panel clearly demonstrates that the proposed single network-based algorithm is more stable with the random choice of the initial parameters than the mult-FC DBSDE solver.\par
\begin{table}[H] \label{ComparisonValues}
               \begin{center}
               \begin{tabular} {|c|c|c|c|c|c|c|c|c|c|}
                    \hline Dimension  & 5 & 10 & 20 & 50 & 100   \\
                    \hline Multi-FC  & $0.7378$ & $1.4578$ & $2.9082$ & $7.2663$ & $14.5331$  \\
                    \hline Single-LSTM  & $0.7285$ & $1.4548$ & $2.9013$ & $7.2690$ & $14.5273$  \\
                    \hline Relative Error  & $1.26\%$ & $0.21\%$ & $0.24\%$ & $0.04\%$ & $0.04\%$  \\
                    \hline
               \end{tabular}
       \vspace{0.5cm}
\begin{minipage}{0.9\textwidth}
{\footnotesize \sf \singlespacing Table \ref{ComparisonValues}. The pre-default values of the defaultable European put option. \textit{Notes.} The table compares the pre-default values of the defaultable European put option, calculated with the multi-FC DBSDE solver and the single LSTM DBSDE solver in the present paper. The table uses the parameters $r = 0.03, \mu=0.05, \sigma=0.2, \lambda=0.1, R = 0.4, K=1, X_{i0}=0.8, i=1,2,\ldots,d, T=1$. The number of the independent trials is $M=5$. We set the results obtained from the Multi-FC DBSDE solver as a benchmark. Thus the relative error is given by $\vert\frac{V^s-V^m}{V^m}\vert,$ where $V^m$ and $V^s$ are the pre-default values calculated with the multi-FC DBSDE solver and the single LSTM DBSDE solver respectively.
}
\end{minipage}
\end {center}
\end{table}
\begin{figure}[H]
\label{Convergence}
\begin{center}
\includegraphics[width=1.0\textwidth]{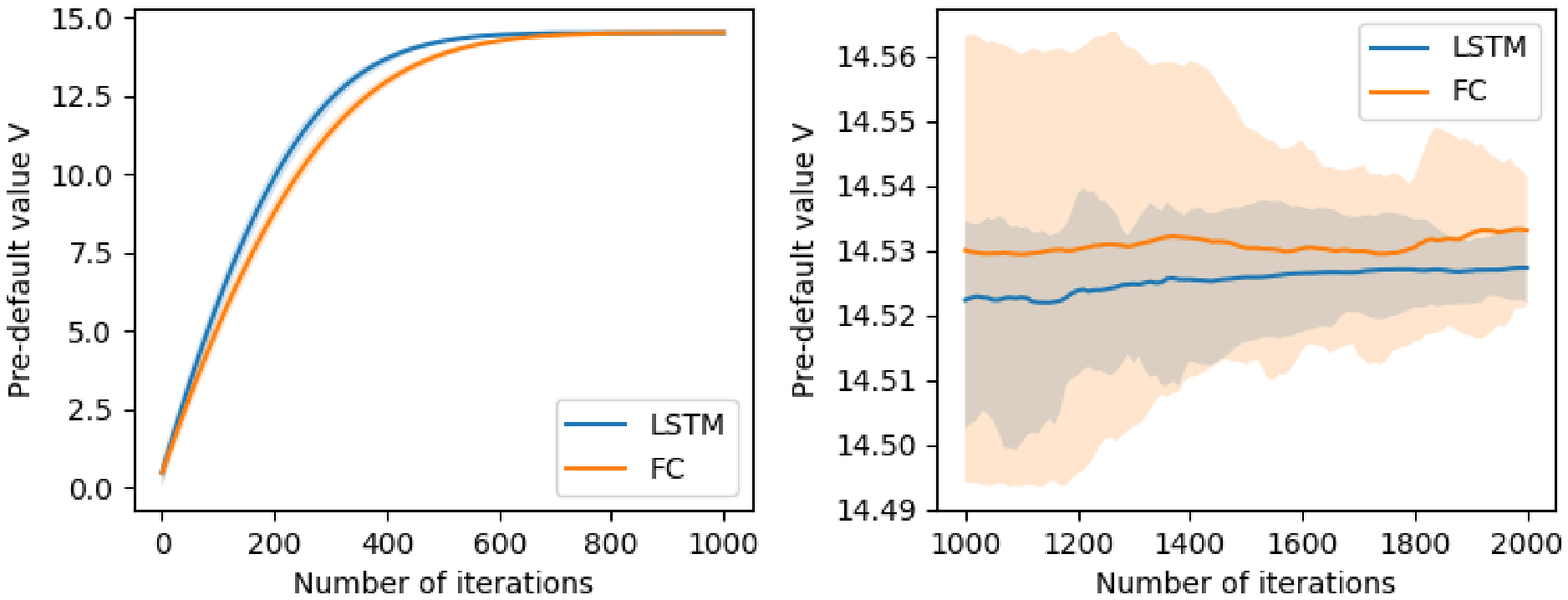}
\begin{minipage}{0.9\textwidth}
{\footnotesize \sf \singlespacing Fig. \ref{Convergence}. Approximation of the pre-default value of the option against the number of iterations. \textit{Notes.} The left panel compares the iteration numbers required for the convergence of the single LSTM DBSDE solver and the multi-FC DBSDE solver respectively. The right panel breaks down the convergence state and demonstrates the stability of the two solvers. The shade areas depict the ranges of the $20$ independent runs of the algorithms with different initial trainable parameters. The figure uses the parameters $r = 0.03, \mu=0.05, \sigma=0.2, \lambda=0.1, R = 0.4, K=1, X_{i0}=0.8, i=1,2,\ldots,d, T=1$.
}
\end{minipage}
\end{center}
\end{figure}

\subsection{Risk-free versus replacement closeout: a consequence for computational cost}
Recall that in the proof of Theorem~\ref{Main},  the pre-default value of a defaultable claim with the replacement closeout is obtained with an iterative process, while the pre-default value with the risk-free closeout is the result of the second iteration. Thus, it seems to suggest that the valuation with replacement closeout is computationally more burdensome. However, the Deep BSDE framework gives a different pattern on the computational cost. A distinctive feature of Deep BSDE solvers is that this type of algorithm tackles the nonlinear problems directly, rather than linearizing the problem using a sequence of iterations. As a result, using deep learning to  compute the replacement closeout is computationally more efficient than computing the risk-free closeout. This is validated in Table~\ref{ComparisonValuesCloseout} which shows that the computational time for the valuation of the defaultable option with the  replacement closeouts is less than half of the corresponding risk-free closeouts. The saving in computational time holds consistently over the dimensions we have compared.
\begin{table}[H]
\label{ComparisonValuesCloseout}
               \begin{center}
               \begin{tabular}{|l|l|l|l|l|l|}
\hline
Dimension   & 5    & 10   & 20   & 50   &100 \\ \hline
Replacement closeout & 318s  & 352s  & 405s  & 599s  & 857s \\ \hline
Risk-free closeout    & 668s  & 751s  & 873s  & 1299s & 1879s \\ \hline
\end{tabular}
       \vspace{0.5cm}
\begin{minipage}{0.9\textwidth}
{\footnotesize \sf \singlespacing Table \ref{ComparisonValuesCloseout}. The running time for the computation of the pre-default values of the defaultable option with the risk-free and replacement closeout conventions respectively. \textit{Notes.}
The table compares the risk-free and replacement closeout conventions in terms of computational cost. All the results are obtained with the single LSTM DBSDE solver. The table uses the parameters $r = 0.03, \mu=0.05, \sigma=0.2, \lambda=0.1, R = 0.4, K=1, X_0=0.8, T=1$. The running time is the average running time of $5$ independent trials.
}
\end{minipage}
\end {center}
\end{table}

\section{Conclusions}\label{Sec:Conclusions}
In this paper, we have developed a general framework to underpin the replacement closeout in the valuation of defaultable claims and counterparty credit risk.
The majority of the paper is focused on how to handle the nonlinear valuation system attributed to the replacement closeout.
In the theoretical part, we show the unique solvability of the nonlinear system and thus justifying the valuation model from a mathematical perspective. Moreover, we study the impact of the replacement on calculation of CVA. Compared to the traditional risk-free closeout, the replacement closeout captures the creditworthiness of the counterparty. Our result shows that the replacement closeout yields a higher CVA than the conventional risk-free counterpart and thus cautioning that lack of the incorporation of the counterparty's credit quality in the recovery calculation would result in an underestimate of CVA.\par
The nonlinearity arising from the replacement closeout poses a computing challenge for high dimensional valuation models. To address this issue, we develop a Deep BSDE algorithm for the valuation of counterparty risk with the replacement closeout. In contrast to the existing BSDE algorithm in the credit risk literature, our algorithm is single network based. Numerical tests demonstrate that our algorithm works satisfactory in high dimensional models and would show better convergence stability than the conventional multi-network based algorithm. Finally, we conduct numerical analysis to show the impact of the replacement closeout on the computational cost. Our numerical results show that the computation with replacement closeout would consume less running time than the risk-free closeout for the valuation of defaultable claims, and thus suggesting that the Deep BSDE type algorithm would help clear the obstacles for the application of the replacement closeout.

\appendix
\renewcommand{\thesubsection}{\Alph{section}.\arabic{subsection}}
\section{Appendix: Lemmas}

\begin{lemma}\label{LemmaGrowth1}
Define \begin{align*}
U_0(t,x) = \mathbb{E}\left[\int^T_t-\vert c(s,X_s)\vert e^{-\int^s_tr(u,X_u)du}ds-e^{-\int^T_tr(s,X_s)ds}\vert \phi(X_T)\vert\vert X_t=x\right],
\end{align*}
\begin{align*}
U_k(t,x) &= \mathbb{E}\left[ \int^T_t(-\vert c(s,X_s)\vert+\lambda(s,X_s)f(s,X_s,U_{k-1}(t,X_s)))e^{-\int^s_tr(u,X_u)+\lambda(u,X_u)du}ds\right.\nonumber\\& \hspace*{0.4in} \left. -e^{-\int^T_tr(s,X_s)+\lambda(s,X_s)ds}\vert \phi(X_T)\vert\vert X_t=x\right], k\ge 1.
\end{align*}
\begin{align*}
J(t,x) &=  \mathbb{E}\left[\int^T_t(-\vert c(s,X_s)\vert+\lambda(s,X_s)f(s,X_s,0))e^{-\int^s_tr(u,X_u)du}ds\right.\nonumber\\&\hspace*{0.4in} \left. -e^{-\int^T_tr(s,X_s)ds}\vert \phi(X_T)\vert\vert X_t=x\right]
\end{align*}
Then \begin{align*}
U_k(t,x)\ge J(t,x), \quad\forall (t,x)\in[0,T]\times\mathbb{R}^m,  \forall k\ge 1.
\end{align*}
\end{lemma}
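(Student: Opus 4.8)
The plan is to compare $J$ with each $U_k$ via the linear Cauchy problems they solve, using only the monotonicity and Lipschitz structure of the closeout function recorded in (\ref{MoralHazard}). Two elementary facts drive the argument. First, putting $y=0$ in $f(t,x,y)\le y$ gives $f(t,x,0)\le 0$, so every term defining $J$ is nonpositive and hence $J\le 0$ on $[0,T]\times\mathbb{R}^m$. Second, (\ref{MoralHazard}) makes $y\mapsto f(t,x,y)$ nondecreasing and $1$-Lipschitz, which yields the pointwise bound
\begin{align*}
f(t,x,y)-f(t,x,0)\ \ge\ \min(y,0),\qquad \forall\,(t,x,y)\in\mathbb{R}_+\times\mathbb{R}^m\times\mathbb{R},
\end{align*}
the case $y\ge 0$ coming from monotonicity and the case $y<0$ from the Lipschitz estimate. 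As in the proof of Proposition~\ref{RiskFreeCVA}, the standing regularity and polynomial-growth hypotheses justify the Feynman--Kac formula, so $J$, $U_0$ and each $U_k$ ($k\ge 1$) are the classical, polynomially growing solutions of $\mathcal{L}J-rJ-|c|+\lambda f(\cdot,0)=0$, $\mathcal{L}U_0-rU_0-|c|=0$ and $\mathcal{L}U_k-(r+\lambda)U_k-|c|+\lambda f(\cdot,U_{k-1})=0$ respectively, each with terminal data $-|\phi|$.

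First I would record the base estimate $U_0\ge J$, which is immediate from the two representations: subtracting them, the terms in $c$ and $\phi$ cancel, leaving
\begin{align*}
U_0(t,x)-J(t,x)=\mathbb{E}\Big[\int_t^T\lambda(s,X_s)\big(-f(s,X_s,0)\big)e^{-\int_t^s r(u,X_u)\,du}\,ds\ \Big|\ X_t=x\Big]\ \ge\ 0
\end{align*}
since $-f(s,x,0)\ge 0$. Then I would induct on $k\ge 1$, with $U_0\ge J$ serving as the $k=1$ input. Assuming $U_{k-1}\ge J$ and writing $\delta:=U_k-J$, subtracting the Cauchy problems for $U_k$ and $J$ and regrouping the zeroth-order terms (using $-(r+\lambda)U_k+rJ=-(r+\lambda)\delta-\lambda J$) shows that $\delta$ solves
\begin{align*}
\mathcal{L}\delta(t,x)-(r(t,x)+\lambda(t,x))\delta(t,x)+\lambda(t,x)\big(f(t,x,U_{k-1}(t,x))-f(t,x,0)-J(t,x)\big)&=0,\\
\delta(T,x)&=0.
\end{align*}
The bracketed source is nonnegative: the pointwise bound gives $f(t,x,U_{k-1})-f(t,x,0)\ge\min(U_{k-1}(t,x),0)$, and since $U_{k-1}\ge J$, monotonicity of $\min(\cdot,0)$ together with $J\le 0$ gives $\min(U_{k-1}(t,x),0)\ge\min(J(t,x),0)=J(t,x)$. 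Thus $\mathcal{L}\delta-(r+\lambda)\delta\le 0$, $\delta(T,\cdot)=0$, $r+\lambda\ge0$, and $\delta$ has polynomial growth, so the parabolic maximum principle (the version already used in the proof of Proposition~\ref{RiskFreeCVA}) gives $\delta\ge 0$, i.e.\ $U_k\ge J$. The induction then yields $U_k\ge J$ for all $k\ge1$.

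I do not expect a real obstacle here; the argument is essentially sign bookkeeping. The two points needing care are (i) verifying that the standing assumptions place $U_k$ and $J$ in $C^{1,2}([0,T)\times\mathbb{R}^m)\cap C([0,T]\times\mathbb{R}^m)$ with polynomial growth, so that Feynman--Kac and the comparison principle genuinely apply --- this is precisely the package already used throughout Section~\ref{Sec:Valuation} --- and (ii) fixing the sign of the source term, which hinges on combining the $1$-Lipschitz/monotone structure of $f$ with $J\le 0$. If one prefers to avoid the maximum principle, there is a purely probabilistic variant: rewriting $J$'s equation as $\mathcal{L}J-(r+\lambda)J+\big(-|c|+\lambda f(\cdot,0)+\lambda J\big)=0$ and reapplying Feynman--Kac represents $J$ with discount rate $r+\lambda$, and then substituting $f(\cdot,U_{k-1})\ge f(\cdot,0)+J$ into the corresponding representation of $U_k$ gives $U_k\ge J$ directly.
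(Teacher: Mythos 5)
Your proof is correct, and it takes a genuinely different, and in one respect cleaner, route than the paper's. The paper first establishes (by pointing to an analogue of Proposition~\ref{PropMonotonicity} applied to the $U_k$ sequence) that $0\ge U_{k-1}\ge U_k$, then uses $f(t,x,y)\ge f(t,x,0)+y$ for $y\le 0$ (the same one-sided Lipschitz bound you record as $f(t,x,y)-f(t,x,0)\ge\min(y,0)$) to derive a differential inequality for $U_k$ \emph{alone}, namely $\mathcal{L}U_k-rU_k-|c|+\lambda f(\cdot,0)\le 0$, and finally compares with the PDE solved by $J$. Your argument instead inducts directly on the claim $U_{k-1}\ge J$ and writes a PDE for the difference $\delta=U_k-J$ with a source term whose sign is controlled by $J\le 0$ and the $\min$ bound. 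The practical difference is that you never need the monotone-decreasing/nonpositive structure of the $U_k$ sequence; this avoids the paper's slightly handwavy appeal to Proposition~\ref{PropMonotonicity} (which as stated concerns the sequence $\{V^k\}$ defined with $c$ and $\phi$, not $\{U_k\}$ defined with $-|c|$ and $-|\phi|$, so strictly an analogous proposition must be invoked). Both proofs ultimately rest on the same ingredients --- Feynman--Kac, the one-sided Lipschitz estimate from (\ref{MoralHazard}), and the parabolic maximum principle --- so neither is more general, but your packaging is more self-contained and requires fewer intermediate facts about the auxiliary sequence.
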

\begin{proof}
It follows from the Feynman-Kac formula and the polynomial growth of $U_k$ (See Proposition \ref{Prop:PolynomialGrowth}.) that
\begin{align}\label{Proof:equation}
\mathcal{L}U_k(t,x)-(r(t,x)+\lambda(t,x))U_k(t,x)-\vert c(t,x)\vert+\lambda(t,x)f(t,x,U_{k-1}(t,x))=0,
\end{align}
Thanks to Proposition \ref{PropMonotonicity}, we have that
\begin{align}\label{Proof:Monotone}
0\ge U_{k-1}(t,x)\ge U_k(t,x), \forall k\ge 1, \forall (t,x)\in[0,T]\times\mathbb{R}^m.
\end{align}
Moreover, it is to see from Condition (\ref{MoralHazard}) that
\begin{align*}
f(x) \ge f(0) + x, \forall x\le 0.
\end{align*}
Thus, from (\ref{Proof:Monotone}) and (\ref{Proof:equation}), we have that
\begin{align*}
0=&\mathcal{L}U_k(t,x)-(r(t,x)+\lambda(t,x))U_k(t,x)-\vert c(t,x)\vert+\lambda(t,x)f(t,x,U_{k-1}(t,x))\\\ge &\mathcal{L}U_k(t,x)-(r(t,x)+\lambda(t,x))U_k(t,x)-\vert c(t,x)\vert+\lambda(t,x)f(t,x,U_{k}(t,x))\\\ge &\mathcal{L}U_k(t,x)-(r(t,x)+\lambda(t,x))U_k(t,x)-\vert c(t,x)\vert+\lambda(t,x)(f(t,x,0)+U_k(t,x)),
\end{align*}
i.e.,
\begin{align}\label{Proof:Inequality1}
\mathcal{L}U_k(t,x)-r(t,x)U_k(t,x)-\vert c(t,x)\vert+\lambda(t,x)f(t,x,0)\le 0.
\end{align}
Using the Feynman-Kac formula on $J(t,x)$, we have that
\begin{align}\label{Proof:Inequality2}
\mathcal{L}J(t,x)-r(t,x)J(t,x)-\vert c(t,x)\vert+\lambda(t,x)f(t,x,0)= 0.
\end{align}
Thanks to (\ref{Proof:Inequality1}) and (\ref{Proof:Inequality2}), we then have
\begin{align}\label{Proof:Inequality3}
\mathcal{L}(J(t,x)-U_k(t,x))-r(t,x)(J(t,x)-U_k(t,x))\ge  0.
\end{align}
Then $J(t,x)\le U_k(t,x)$ follows from the maximum principle and the fact that $J(T,x)-U_k(T,x)=0$.\par
This completes the proof.\par
\end{proof}

\begin{lemma}\label{LemmaGrowth}
\begin{align*}
U(t,x)\ge V^k(t,x)\ge J(t,x), \quad\forall (t,x)\in[0,T]\times\mathbb{R}^m, \forall k\ge 1.
\end{align*}
\end{lemma}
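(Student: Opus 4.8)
The plan is to split the two-sided estimate and handle the upper bound almost for free, then derive the lower bound by comparing $V^k$ with the pessimistic iterates $U_k$ of Lemma~\ref{LemmaGrowth1}. For the upper bound $U(t,x)\ge V^k(t,x)$: by definition of the iteration~(\ref{Iteration}) we have $V^0=U$, and Proposition~\ref{PropMonotonicity} gives $V^k\le V^{k-1}$ for every $k\ge 1$, so a one-line induction yields $V^k\le V^0=U$ for all $k\ge 1$ and all $(t,x)\in[0,T]\times\mathbb{R}^m$.

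For the lower bound, the key step is the pathwise comparison $V^k\ge U_k$, which I would prove by induction on $k\ge 0$. Note first that $V^k$ and $U_k$ are conditional expectations, given $X_t=x$, of functionals of the \emph{same} path of $X$, built with the \emph{same} nonnegative discount weight ($e^{-\int_t^s r(u,X_u)\,du}$ for $k=0$ and $e^{-\int_t^s (r(u,X_u)+\lambda(u,X_u))\,du}$ for $k\ge 1$). Base case $k=0$: since $c(s,X_s)\ge -\vert c(s,X_s)\vert$ and $\phi(X_T)\ge -\vert\phi(X_T)\vert$ pathwise, the integrand and terminal term defining $V^0=U$ dominate, path by path, those defining $U_0$, and monotonicity of the conditional expectation gives $V^0\ge U_0$. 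Inductive step: assume $V^{k-1}\ge U_{k-1}$ on all of $[0,T]\times\mathbb{R}^m$; evaluating along the path gives $V^{k-1}(s,X_s)\ge U_{k-1}(s,X_s)$, and condition~(\ref{MoralHazard}) (monotonicity of $f$ in its third argument) then gives $f(s,X_s,V^{k-1}(s,X_s))\ge f(s,X_s,U_{k-1}(s,X_s))$. Combining this with $\lambda\ge 0$, $c\ge -\vert c\vert$, $\phi\ge -\vert\phi\vert$ and positivity of the discount weight, the integrand and terminal term in the representation~(\ref{Iteration}) of $V^k$ dominate pathwise those in the definition of $U_k$, so $V^k\ge U_k$. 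Chaining with Lemma~\ref{LemmaGrowth1}, which gives $U_k\ge J$ for $k\ge 1$, we obtain $V^k\ge U_k\ge J$ for every $k\ge 1$, which together with the upper bound completes the proof. (Alternatively, one could run the same comparison at the level of the Cauchy problems solved by $V^k$, $U_k$, $U$, $J$ and invoke the maximum principle, exactly as in the proofs of Proposition~\ref{RiskFreeCVA} and Lemma~\ref{LemmaGrowth1}.)

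I do not expect a genuine obstacle: the argument is a pathwise domination passed through a conditional expectation, and the only point requiring care is to keep the sign conditions ($\lambda\ge 0$, the nonnegative discount factors, $c\ge -\vert c\vert$, $\phi\ge -\vert\phi\vert$) aligned with the monotonicity of $f$ so that the induction closes. One should also record, as for the $V^k$, that the $U_k$ are well defined with polynomial growth in $x$, uniformly in $t$; this follows exactly as in the proof of Proposition~\ref{RiskFreeCVA}, using~(\ref{PolynomialGrowth}), (\ref{Moment}) and the linear growth in $y$ of $f$ implied by~(\ref{MoralHazard}).
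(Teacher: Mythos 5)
Your proof is correct and follows essentially the same route as the paper's: the upper bound $V^k\le U$ comes from Proposition \ref{PropMonotonicity} with $V^0=U$, and the lower bound is obtained by first proving $U_k\le V^k$ by induction (base case from $-\vert c\vert\le c$, $-\vert\phi\vert\le\phi$; inductive step from the monotonicity of $f$ in $y$ and $\lambda\ge 0$) and then invoking Lemma \ref{LemmaGrowth1}. Your write-up is more explicit than the paper's terse ``it is easy to see \ldots\ following the mathematical induction,'' but the decomposition and the key comparisons are identical.
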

\begin{proof}
It is easy to see that $U_0(t,x)\le U(t,x)$. Then the monotonicity of $f(t,x,y)$ in $y$ yields that $U_1(t,x)\le V^1(t,x)$. Following the mathematical induction, we have that $U_k(t,x)\le V^k(t,x), \forall k\ge 1.$ Then the conclusion follows from Lemma \ref{LemmaGrowth1} and Proposition \ref{PropMonotonicity}.
\end{proof}\par

\begin{lemma}\label{Lem:MonotonicityStep1Bilateral}
$\Psi_0(t,x)\ge V(t,x), \forall k\ge 1, (t,x)\in [0,T]\times\mathbb{R}^m$.
\end{lemma}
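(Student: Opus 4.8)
The plan is to compare the linear Cauchy problems solved by $V$ and by $\Psi_0$ and then invoke the comparison (maximum) principle. First I would recall, as established in Theorem~\ref{Main} via the Feynman--Kac formula, that $V\in C^{1,2}([0,T)\times\mathbb{R}^m)\cap C([0,T]\times\mathbb{R}^m)$ has polynomial growth and solves
\begin{align*}
\mathcal{L}V(t,x)-(r(t,x)+\lambda(t,x))V(t,x)+\lambda(t,x)f(t,x,V(t,x))+c(t,x)&=0,\quad (t,x)\in[0,T)\times\mathbb{R}^m,\\
V(T,x)&=\phi(x),\quad x\in\mathbb{R}^m.
\end{align*}
Since $V$ is continuous with polynomial growth, $f$ and $\bar f$ are locally H\"older in $x$ with linear growth in $y$ (the latter by condition (\ref{MoralHazard})), and $c,\lambda,\bar\lambda,r,\phi$ obey the standing growth and regularity assumptions, the source term $c+\lambda f(\cdot,\cdot,V)+\bar\lambda\bar f(\cdot,\cdot,V)$ is locally H\"older in $x$ with polynomial growth. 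Hence, exactly as in the arguments used for Proposition~\ref{RiskFreeCVA} and in the proof of Theorem~\ref{Main}, the linear Cauchy problem with this source term and terminal datum $\phi$ admits a unique classical solution with polynomial growth, which by Feynman--Kac coincides with $\Psi_0$ as given by (\ref{BenchmarkBilateral}); in particular $\Psi_0\in C^{1,2}([0,T)\times\mathbb{R}^m)\cap C([0,T]\times\mathbb{R}^m)$ and
\begin{align*}
\mathcal{L}\Psi_0(t,x)-(r(t,x)+\lambda(t,x)+\bar\lambda(t,x))\Psi_0(t,x)+\lambda(t,x)f(t,x,V(t,x))+\bar\lambda(t,x)\bar f(t,x,V(t,x))+c(t,x)&=0.
\end{align*}

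Next I would set $w:=\Psi_0-V$ and subtract the two PDEs. The terms $\lambda(t,x)f(t,x,V(t,x))$ and $c(t,x)$ cancel, and regrouping the zeroth-order terms via $-(r+\lambda+\bar\lambda)\Psi_0+(r+\lambda)V=-(r+\lambda)w-\bar\lambda\Psi_0$ gives
\begin{align*}
\mathcal{L}w(t,x)-(r(t,x)+\lambda(t,x))w(t,x)-\bar\lambda(t,x)\Psi_0(t,x)+\bar\lambda(t,x)\bar f(t,x,V(t,x))=0.
\end{align*}
Now I would use the bilateral closeout assumption $\bar f(t,x,y)\ge y$, so $\bar f(t,x,V(t,x))\ge V(t,x)$, together with $\bar\lambda\ge 0$, to obtain
\[
-\bar\lambda(t,x)\Psi_0(t,x)+\bar\lambda(t,x)\bar f(t,x,V(t,x))\ge -\bar\lambda(t,x)\bigl(\Psi_0(t,x)-V(t,x)\bigr)=-\bar\lambda(t,x)w(t,x),
\]
and therefore
\begin{align*}
\mathcal{L}w(t,x)-(r(t,x)+\lambda(t,x)+\bar\lambda(t,x))w(t,x)\le 0,\quad (t,x)\in[0,T)\times\mathbb{R}^m,
\end{align*}
with $w(T,x)=0$.

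Finally, since $w$ has polynomial growth and $r+\lambda+\bar\lambda\ge 0$, the maximum principle --- applied exactly as in the uniqueness part of the proof of Theorem~\ref{Main} --- yields $w\ge 0$, i.e.\ $\Psi_0(t,x)\ge V(t,x)$ on $[0,T]\times\mathbb{R}^m$, which is the claim. The only step requiring more than a routine sign computation is the first one, namely certifying that $\Psi_0$ is a genuine $C^{1,2}$ solution of its Cauchy problem so that the PDE subtraction is legitimate; this is handled by the same interior Schauder/$L^p$ estimates and Feynman--Kac identification already invoked elsewhere in the paper, after which the conclusion is immediate from the comparison principle.
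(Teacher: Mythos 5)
Your proposal is correct and follows essentially the same route as the paper: both identify the linear Cauchy problem solved by $\Psi_0$ (source $\lambda f(\cdot,\cdot,V)+\bar\lambda\bar f(\cdot,\cdot,V)+c$, killing rate $r+\lambda+\bar\lambda$) and the nonlinear one solved by $V$, subtract, use $\bar f(t,x,y)\ge y$ to obtain $\mathcal{L}w-(r+\lambda+\bar\lambda)w\le 0$ with $w(T,\cdot)=0$, and finish with the maximum principle. Your intermediate regrouping $-\bar\lambda\Psi_0+\bar\lambda\bar f(V)\ge -\bar\lambda w$ is just an algebraic rearrangement of the paper's $\mathcal{L}S-(r+\lambda+\bar\lambda)S+\bar\lambda(\bar f(V)-V)=0$ with $\bar\lambda(\bar f(V)-V)\ge 0$, so the two arguments coincide in substance.
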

\begin{proof}
Similar to the proof of Proposition \ref{RiskFreeCVA}, one can see that $\Psi_0$ has polynomial growth and is the classical solution to the following Cauchy problem,
\begin{align*}
&\mathcal{L}\Psi_0(t,x)-(\lambda(t,x)+\bar{\lambda}(t,x)+r(t,x))\Psi_0(t,x)+\lambda(t,x)f(t,x,V(t,x))\\&+\bar{\lambda}(t,x)\bar{f}(t,x,V(t,x))+c(t,x)=0, \quad (t,x)\in[0,T)\times\mathbb{R}^m,\\&\Psi_0(T,x) = \phi(x), \quad x\in\mathbb{R}^m.
\end{align*}
It follows from Theorem \ref{Main} and the Feynman-Kac formula that $V$ is the classical solution to the following Cauchy Problem
\begin{align*}
\mathcal{L}V(t,x)-(\lambda(t,x)+r(t,x))V(t,x)+\lambda(t,x)f(t,x,V(t,x))+c(t,x)&=0, \quad (t,x)\in[0,T)\times\mathbb{R}^m,\\V(T,x) &= \phi(x), \quad x\in\mathbb{R}^m.
\end{align*}
Let $S(t,x): = \Psi_0(t,x) - V(t,x).$ Then some algebra yields that
\begin{align*}
\mathcal{L}S(t,x)-(\lambda(t,x)+\bar{\lambda}(t,x)+r(t,x))S(t,x)+\bar{\lambda}(t,x)(\bar{f}(t,x,V(t,x))-V(t,x))&=0, \quad (t,x)\in[0,T)\times\mathbb{R}^m,\\S(T,x)& = 0, \quad x\in\mathbb{R}^m.
\end{align*}
As $\bar{f}(t,x,V(t,x))-V(t,x)\ge 0,$ we have that,
\begin{align*}
\mathcal{L}S(t,x)-(\lambda(t,x)+\bar{\lambda}(t,x)+r(t,x))S(t,x)&\le 0, \quad (t,x)\in[0,T)\times\mathbb{R}^m.
\end{align*}
Then it follows from the maximum principle that
\begin{align*}
S(t,x) \ge 0, \quad (t,x)\in[0,T]\times\mathbb{R}^m,
\end{align*}
This completes the proof.
\end{proof}
\begin{lemma}\label{Lem:MonotonicityBilateral}
Let \begin{align*}
\Psi^k(t,x)&= \mathbb{E}\left[\int^T_t(c(s,X_s)+\lambda(s,X_s)f(s,X_s,\Psi^{k-1}(t,X_s))\right.\nonumber\\&\hspace*{0.4in} \left. +\bar{\lambda}(s,X_s)\bar{f}(s,X_s,\Psi^{k-1}(t,X_s)))e^{-\int^s_tr(u,X_u)+\lambda(u,X_u)+\bar{\lambda}(u,X_u)du}ds\right.\nonumber\\&\hspace*{0.4in} \left. +e^{-\int^T_tr(s,X_s)+\lambda(s,X_s)+\bar{\lambda}(u,X_u)ds}\phi(X_T)\vert X_t=x\right], \forall k = 1,2,3...,
\end{align*}
where $\Psi^0(t,x) = \Psi_0(t,x).$\par
Then $\Psi^k(t,x)\ge \Psi^{k-1}(t,x), \forall k\ge 1, (t,x)\in [0,T]\times\mathbb{R}^m.$
\end{lemma}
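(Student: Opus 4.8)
The plan is to prove the inequality by induction on $k$, mirroring the proof of Proposition \ref{PropMonotonicity} but with the monotonicity reversed, since the base step now points the other way. As a preliminary I would note that each $\Psi^k$, $k\ge 0$, is well defined and has polynomial growth in $x$ uniformly in $t$: arguing exactly as in Proposition \ref{Prop:PolynomialGrowth}, using the polynomial growth assumption (\ref{PolynomialGrowth}) on $c,\lambda,\bar\lambda,\phi,r$, the moment bound (\ref{Moment}), and the fact that (\ref{MoralHazard}) forces both $f$ and $\bar f$ to have at most linear growth in $y$ uniformly in $(t,x)$, one checks inductively that the expectation defining $\Psi^k$ is finite, which is what makes the iteration meaningful.

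For the base case $k=1$, I would compare $\Psi^1$ with $\Psi^0=\Psi_0$. Both are given by the bilateral representation with the same coefficients and the same nonnegative discounting $e^{-\int(r+\lambda+\bar\lambda)}$; the only difference is that the closeout terms read $\lambda f(\cdot,\Psi_0)+\bar\lambda\bar f(\cdot,\Psi_0)$ in $\Psi^1$ and $\lambda f(\cdot,V)+\bar\lambda\bar f(\cdot,V)$ in $\Psi^0$. By Lemma \ref{Lem:MonotonicityStep1Bilateral}, $\Psi_0\ge V$; since $f$ and $\bar f$ are nondecreasing in their last argument by (\ref{MoralHazard}) and $\lambda,\bar\lambda\ge 0$, the integrand for $\Psi^1$ dominates that for $\Psi^0$, and taking conditional expectations gives $\Psi^1\ge\Psi^0$ on $[0,T]\times\mathbb{R}^m$.

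For the inductive step, assume $\Psi^m\ge\Psi^{m-1}$ and set $w:=\Psi^{m+1}-\Psi^m$. Subtracting the two integral representations, the contributions of $c$, $\phi$ and the discount factors cancel, and one is left with
\begin{align*}
w(t,x)=\mathbb{E}\Bigl[\int_t^T\bigl(&\lambda(s,X_s)\bigl(f(s,X_s,\Psi^m(s,X_s))-f(s,X_s,\Psi^{m-1}(s,X_s))\bigr)\\
&+\bar\lambda(s,X_s)\bigl(\bar f(s,X_s,\Psi^m(s,X_s))-\bar f(s,X_s,\Psi^{m-1}(s,X_s))\bigr)\bigr)\\
&\times e^{-\int_t^s(r+\lambda+\bar\lambda)(u,X_u)\,du}\,ds\ \Big|\ X_t=x\Bigr].
\end{align*}
By the induction hypothesis and the monotonicity of $f$ and $\bar f$ in $y$, both bracketed differences are nonnegative, and $\lambda,\bar\lambda$ and the exponential weight are nonnegative as well, whence $w\ge 0$, i.e. $\Psi^{m+1}\ge\Psi^m$. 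The claim follows by induction.

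I do not expect a substantive obstacle: the only delicate points are (i) the polynomial-growth bookkeeping that guarantees the iteration is well posed, and (ii) getting the direction of the base-case inequality right, which is exactly where Lemma \ref{Lem:MonotonicityStep1Bilateral} --- and through it the investor-side incentive condition $\bar f(t,x,y)\ge y$ --- enters, and which explains why the sequence here is increasing rather than decreasing as in the unilateral case. An equivalent alternative would be to pass through the associated linear Cauchy problems via Feynman-Kac, as in Lemma \ref{Lem:MonotonicityStep1Bilateral}, and apply the maximum principle to $\Psi^{m+1}-\Psi^m$, which satisfies a homogeneous linear parabolic equation with zero terminal data; I would use whichever presentation is shorter.
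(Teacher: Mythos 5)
Your proposal is correct and follows essentially the same route as the paper: both establish the base case $\Psi^1\ge\Psi^0$ by invoking Lemma \ref{Lem:MonotonicityStep1Bilateral} together with the monotonicity in $y$ of $f$ and $\bar f$ from (\ref{MoralHazard}), and then close by induction exactly as in Proposition \ref{PropMonotonicity} with the sign reversed. The paper's proof is terser (it merely cites the base inequality and says ``mathematical induction''), but your explicit write-out of the inductive step and the well-posedness bookkeeping is simply filling in what the paper leaves implicit.
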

\begin{proof}
Thanks to the monotonicity of $f(t,x,y)$ and $\bar{f}(t,x,y)$ in $y,$ and Lemma \ref{Lem:MonotonicityStep1Bilateral}, we have that
\begin{align*}
\Psi^{1}(t,x)\ge \Psi^0(t,x), (t,x)\in [0,T]\times\mathbb{R}^m.
\end{align*}
Then the conclusion can be obtained by the mathematical induction.
\end{proof}
\begin{lemma}\label{Lem:UpperBoundBilateral}
Let
\begin{align*}
I(t,x) &=  \mathbb{E}\left[\int^T_t( c(s,X_s)+\bar{\lambda}(s,X_s)(\bar{f}(s,X_s,V(s,X_s))-V(s,X_s)))e^{-\int^s_tr(u,X_u)du}ds\right.\nonumber\\& \hspace*{0.4in} \left.  +e^{-\int^T_tr(s,X_s)ds}\phi(X_T)\vert X_t=x\right].
\end{align*}
Then $\Psi^k(t,x)\le I(t,x), k=1,2,3..., (t,x)\in [0,T]\times\mathbb{R}^m.$
\end{lemma}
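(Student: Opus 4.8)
The plan is to prove, for each fixed $k\ge 1$, that $D:=I-\Psi^k$ is nonnegative by exhibiting it as the solution of a linear Cauchy problem whose source term and terminal datum have the right sign, and then invoking the parabolic maximum principle, exactly as in the proofs of Proposition~\ref{RiskFreeCVA} and Lemma~\ref{Lem:MonotonicityStep1Bilateral}. No induction on $k$ is needed beyond the monotonicity already established.

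First I would record the requisite regularity. Proceeding by induction on $k$ as for the iterates $V^k$ in the proof of Theorem~\ref{Main}, each $\Psi^k$ is well defined, has polynomial growth in $x$ uniformly in $t$ --- using that (\ref{MoralHazard}) forces $f$ and $\bar f$ to have linear growth in $y$, together with the polynomial growth of $c,\lambda,\bar\lambda,r,\phi$ and the moment estimate (\ref{Moment}) --- and, by the local H\"older continuity of the coefficients, the uniform parabolicity of $\mathcal{L}$ and the Feynman--Kac formula, belongs to $C^{1,2}([0,T)\times\mathbb{R}^m)\cap C([0,T]\times\mathbb{R}^m)$ and solves
\[
\mathcal{L}\Psi^k-(r+\lambda+\bar\lambda)\Psi^k+\lambda f(\cdot,\Psi^{k-1})+\bar\lambda\bar f(\cdot,\Psi^{k-1})+c=0,\qquad \Psi^k(T,\cdot)=\phi,
\]
where every coefficient is evaluated at $(t,x)$. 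The same reasoning --- noting that $x\mapsto\bar f(t,x,V(t,x))$ is locally H\"older in $x$ because $\bar f$ is locally H\"older in $x$ and $1$-Lipschitz in $y$ while $V\in C^{1,2}$ --- shows that $I$ has polynomial growth, lies in the same space and solves $\mathcal{L}I-rI+c+\bar\lambda(\bar f(\cdot,V)-V)=0$ with $I(T,\cdot)=\phi$.

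Next I would assemble the ordering already at hand: $f(t,x,y)\le y$ for all $(t,x,y)$; $\Psi^{k-1}\le\Psi^k$ for $k\ge1$ by Lemma~\ref{Lem:MonotonicityBilateral}; and, iterating Lemma~\ref{Lem:MonotonicityBilateral} down to $\Psi^0=\Psi_0$ and then using Lemma~\ref{Lem:MonotonicityStep1Bilateral}, $\Psi^{k-1}\ge\Psi_0\ge V$. Subtracting the two Cauchy problems gives $D(T,\cdot)=0$ and
\[
\mathcal{L}D-rD=-\lambda\big(\Psi^k-f(\cdot,\Psi^{k-1})\big)-\bar\lambda\Big[(\Psi^k-V)-\big(\bar f(\cdot,\Psi^{k-1})-\bar f(\cdot,V)\big)\Big].
\]
The first bracket is nonnegative since $f(\cdot,\Psi^{k-1})\le\Psi^{k-1}\le\Psi^k$; the second is nonnegative because the incentive-compatibility condition (\ref{MoralHazard}) for $\bar f$ and $\Psi^{k-1}\ge V$ give $\bar f(\cdot,\Psi^{k-1})-\bar f(\cdot,V)\le\Psi^{k-1}-V$, whence the bracket dominates $\Psi^k-\Psi^{k-1}\ge0$. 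As $\lambda,\bar\lambda\ge0$, this yields $\mathcal{L}D-rD\le0$; since $D$ has polynomial growth and $D(T,\cdot)=0$, the maximum principle (see, e.g., \citeauthor{friedman2008partial}, \citeyear{friedman2008partial}, Chapter~2) gives $D\ge0$, i.e.\ $\Psi^k\le I$.

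I expect the only genuinely fiddly part to be the regularity bookkeeping of the first step --- checking that all composed source terms ($\lambda f(\cdot,\Psi^{k-1})$, $\bar\lambda\bar f(\cdot,\Psi^{k-1})$, $\bar f(\cdot,V)-V$) retain local H\"older continuity in $x$ and polynomial growth, so that the Feynman--Kac representation applies and so that the resulting bounds are uniform in $k$ --- but this is entirely parallel to what has already been done for $V^k$ and $\Psi_0$. Once the two linear problems are in place, the comparison is an immediate consequence of the sign computation above, and it is this uniform-in-$k$ upper bound, combined with Lemmas~\ref{Lem:MonotonicityStep1Bilateral} and \ref{Lem:MonotonicityBilateral}, that will drive the proof of Theorem~\ref{MainBilateral}.
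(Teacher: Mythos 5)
Your proof is correct and follows essentially the same route as the paper: both establish the PDEs for $I$ and $\Psi^k$ via Feynman--Kac, use $f(\cdot,\Psi^{k-1})\le\Psi^{k-1}\le\Psi^k$ together with the monotonicity of $y-\bar f(\cdot,y)$ (equivalently, the $1$-Lipschitz bound you apply) and $\Psi^{k-1}\ge V$ from Lemmas~\ref{Lem:MonotonicityStep1Bilateral} and \ref{Lem:MonotonicityBilateral}, and conclude by the parabolic maximum principle. The only cosmetic difference is that you subtract the two Cauchy problems at once and sign-check the resulting source term, whereas the paper first derives the intermediate inequalities (\ref{L2})--(\ref{L4}) for $\Psi^k$ before comparing with $I$.
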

\begin{proof}
As $V$ has polynomial growth (see Theorem \ref{Main}), $I$ is well-defined. Then it follows from the Feynman-Kac formula that
\begin{align*}
\mathcal{L}I(t,x) - r(t,x)I(t,x) +\bar{\lambda}(t,x)(\bar{f}(t,x,V(t,x))-V(t,x)) +c(t,x) &= 0,\quad  (t,x)\in [0,T]\times\mathbb{R}^m,\\I(T,x) &= \phi(x), \quad x\in\mathbb{R}^m.
\end{align*}
Similarly, we have that
\begin{align*}
&\mathcal{L}\Psi^k(t,x)-(\lambda(t,x)+\bar{\lambda}(t,x)+r(t,x))\Psi^{k}(t,x)+\lambda(t,x)f(t,x,\Psi^{k-1}(t,x))\\&+\bar{\lambda}(t,x)\bar{f}(t,x,\Psi^{k-1}(t,x))+c(t,x)=0, \quad (t,x)\in[0,T)\times\mathbb{R}^m,\\&\Psi^k(T,x) = \phi(x), \quad x\in\mathbb{R}^m.
\end{align*}
As $f(t,x,y)\le y$, then Lemma \ref{Lem:MonotonicityBilateral} yields that
\begin{align}\label{L1}
f(t,x,\Psi^{k-1}(t,x)) \le \Psi^{k}(t,x).
\end{align}
Plug (\ref{L1}) into the PDE that $\Psi^k$ satisfies, then we have that
\begin{align}\label{L2}
\mathcal{L}\Psi^k(t,x)-(\bar{\lambda}(t,x)+r(t,x))\Psi^{k}(t,x)+\bar{\lambda}(t,x)\bar{f}(t,x,\Psi^{k-1}(t,x))+c(t,x)\ge 0.
\end{align}
Note that condition (\ref{MoralHazard}) implies that $y-\bar{f}(t,x,y)$ is monotone in $y$, then Lemma \ref{Lem:MonotonicityBilateral} and Lemma \ref{Lem:MonotonicityStep1Bilateral} yield that
\begin{align}\label{L3}
\Psi^k(t,x) - \bar{f}(t,x,\Psi^{k-1}(t,x))\ge \Psi^{k-1}(t,x) - \bar{f}(t,x,\Psi^{k-1}(t,x)) \ge  V(t,x) - \bar{f}(t,x,V(t,x)).
\end{align}
Then it follows from (\ref{L2}) and (\ref{L3}) that
\begin{align}\label{L4}
\mathcal{L}\Psi^k(t,x)-r(t,x)\Psi^{k}(t,x)+\bar{\lambda}(t,x)(\bar{f}(t,x,V(t,x))-V(t,x)) +c(t,x) \ge 0.
\end{align}
Let $Q(t,x) = I(t,x) - \Psi^k(t,x)$, then (\ref{L4}) and the PDE that $I$ satisfies yield that
\begin{align*}
\mathcal{L} Q(t,x)-r(t,x)Q(t,x)& \le 0,  \quad (t,x)\in[0,T)\times\mathbb{R}^m,\\&Q(T,x) = 0, \quad x\in\mathbb{R}^m.
\end{align*}
This implies $Q(t,x)\ge 0, (t,x)\in[0,T]\times\mathbb{R}$ and thus completing the proof.
\end{proof}


\bibliography{EbertReferences}        

\begin{thebibliography}{62}
\newcommand{\enquote}[1]{``#1''}
\expandafter\ifx\csname natexlab\endcsname\relax\def\natexlab#1{#1}\fi

\bibitem[\protect\citeauthoryear{Albanese, Andersen, and Iabichino}{Albanese
  et~al.}{2015}]{albanese2015fva}
\textsc{Albanese, C., L.~Andersen, and S.~Iabichino} (2015): \enquote{FVA
  accounting, risk management and collateral trading,} \emph{Risk}, 64.

\bibitem[\protect\citeauthoryear{Albanese, Cr{\'e}pey, Hoskinson, and
  Saadeddine}{Albanese et~al.}{2021}]{albanese2021xva}
\textsc{Albanese, C., S.~Cr{\'e}pey, R.~Hoskinson, and B.~Saadeddine} (2021):
  \enquote{XVA analysis from the balance sheet,} \emph{Quantitative Finance},
  21, 99--123.

\bibitem[\protect\citeauthoryear{Andersen, Duffie, and Song}{Andersen
  et~al.}{2019}]{andersen2019funding}
\textsc{Andersen, L., D.~Duffie, and Y.~Song} (2019): \enquote{Funding value
  adjustments,} \emph{The Journal of Finance}, 74, 145--192.

\bibitem[\protect\citeauthoryear{Aronson and Besala}{Aronson and
  Besala}{1967}]{aronson1967parabolic}
\textsc{Aronson, D. and P.~Besala} (1967): \enquote{Parabolic equations with
  unbounded coefficients,} \emph{Journal of Differential Equations}, 3, 1--14.

\bibitem[\protect\citeauthoryear{Beck, E, and Jentzen}{Beck
  et~al.}{2019}]{beck2019machine}
\textsc{Beck, C., W.~E, and A.~Jentzen} (2019): \enquote{Machine learning
  approximation algorithms for high-dimensional fully nonlinear partial
  differential equations and second-order backward stochastic differential
  equations,} \emph{Journal of Nonlinear Science}, 29, 1563--1619.

\bibitem[\protect\citeauthoryear{Beck, Hutzenthaler, Jentzen, and Kuckuck}{Beck
  et~al.}{2020}]{beck2020overview}
\textsc{Beck, C., M.~Hutzenthaler, A.~Jentzen, and B.~Kuckuck} (2020):
  \enquote{An overview on deep learning-based approximation methods for partial
  differential equations,} \emph{arXiv preprint arXiv:2012.12348}.

\bibitem[\protect\citeauthoryear{Bellman}{Bellman}{1957}]{bellman1957dynamic}
\textsc{Bellman, R.} (1957): \enquote{Dynamic programming,} \emph{Press
  Princeton, New Jersey}.

\bibitem[\protect\citeauthoryear{Biagini, Gnoatto, Oliva et~al.}{Biagini
  et~al.}{2019}]{biagini2019pricing}
\textsc{Biagini, F., A.~Gnoatto, I.~Oliva, et~al.} (2019): \enquote{Pricing of
  counterparty risk and funding with CSA discounting, portfolio effects and
  initial margin,} \emph{arXiv preprint arXiv:1905.11328}.

\bibitem[\protect\citeauthoryear{Bichuch, Capponi, and Sturm}{Bichuch
  et~al.}{2018}]{bichuch2018arbitrage}
\textsc{Bichuch, M., A.~Capponi, and S.~Sturm} (2018): \enquote{Arbitrage-free
  XVA,} \emph{Mathematical Finance}, 28, 582--620.

\bibitem[\protect\citeauthoryear{Bielecki, Jeanblanc, and Rutkowski}{Bielecki
  et~al.}{2008}]{bielecki2008pricing}
\textsc{Bielecki, T.~R., M.~Jeanblanc, and M.~Rutkowski} (2008):
  \enquote{Pricing and trading credit default swaps in a hazard process model,}
  \emph{The Annals of Applied Probability}, 18, 2495--2529.

\bibitem[\protect\citeauthoryear{Bielecki and Rutkowski}{Bielecki and
  Rutkowski}{2013}]{bielecki2013credit}
\textsc{Bielecki, T.~R. and M.~Rutkowski} (2013): \emph{Credit risk: modeling,
  valuation and hedging}, Springer Science \& Business Media.

\bibitem[\protect\citeauthoryear{Blanchet-Scalliet and
  Jeanblanc}{Blanchet-Scalliet and Jeanblanc}{2004}]{blanchet2004hazard}
\textsc{Blanchet-Scalliet, C. and M.~Jeanblanc} (2004): \enquote{Hazard rate
  for credit risk and hedging defaultable contingent claims,} \emph{Finance and
  Stochastics}, 8, 145--159.

\bibitem[\protect\citeauthoryear{Brigo, Buescu, and Morini}{Brigo
  et~al.}{2012}]{brigo2012counterparty}
\textsc{Brigo, D., C.~Buescu, and M.~Morini} (2012): \enquote{Counterparty risk
  pricing: Impact of closeout and first-to-default times,} \emph{International
  Journal of Theoretical and Applied Finance}, 15, 1250039.

\bibitem[\protect\citeauthoryear{Brigo, Capponi, and Pallavicini}{Brigo
  et~al.}{2014}]{brigo2014arbitrage}
\textsc{Brigo, D., A.~Capponi, and A.~Pallavicini} (2014):
  \enquote{Arbitrage-free bilateral counterparty risk valuation under
  collateralization and application to credit default swaps,}
  \emph{Mathematical Finance: An International Journal of Mathematics,
  Statistics and Financial Economics}, 24, 125--146.

\bibitem[\protect\citeauthoryear{Brigo and Chourdakis}{Brigo and
  Chourdakis}{2009{\natexlab{a}}}]{BrigoChourdakisForth}
\textsc{Brigo, D. and K.~Chourdakis} (2009{\natexlab{a}}):
  \enquote{Counterparty Risk for Credit Default Swaps,} \emph{International
  Journal of Theoretical and Applied Finance}, 12, 1007--1026.

\bibitem[\protect\citeauthoryear{Brigo and Chourdakis}{Brigo and
  Chourdakis}{2009{\natexlab{b}}}]{brigo2009counterparty}
---\hspace{-.1pt}---\hspace{-.1pt}--- (2009{\natexlab{b}}):
  \enquote{Counterparty risk for credit default swaps: Impact of spread
  volatility and default correlation,} \emph{International Journal of
  Theoretical and Applied Finance}, 12, 1007--1026.

\bibitem[\protect\citeauthoryear{Brigo, Francischello, and Pallavicini}{Brigo
  et~al.}{2019}]{brigo2019nonlinear}
\textsc{Brigo, D., M.~Francischello, and A.~Pallavicini} (2019):
  \enquote{Nonlinear valuation under credit, funding, and margins: Existence,
  uniqueness, invariance, and disentanglement,} \emph{European Journal of
  Operational Research}, 274, 788--805.

\bibitem[\protect\citeauthoryear{Brigo, Liu, Pallavicini, and Sloth}{Brigo
  et~al.}{2016}]{brigo2016nonlinearity}
\textsc{Brigo, D., Q.~D. Liu, A.~Pallavicini, and D.~Sloth} (2016):
  \enquote{Nonlinearity valuation adjustment,} in \emph{Innovations in
  Derivatives Markets}, Springer, Cham, 3--35.

\bibitem[\protect\citeauthoryear{Brigo and Morini}{Brigo and
  Morini}{2011}]{brigo2011close}
\textsc{Brigo, D. and M.~Morini} (2011): \enquote{Close-out convention
  tensions,} \emph{Risk}, 24, 74.

\bibitem[\protect\citeauthoryear{Burgard and Kjaer}{Burgard and
  Kjaer}{2011}]{burgard2011partial}
\textsc{Burgard, C. and M.~Kjaer} (2011): \enquote{Partial differential
  equation representations of derivatives with bilateral counterparty risk and
  funding costs,} \emph{The Journal of Credit Risk}, 7, 1--19.

\bibitem[\protect\citeauthoryear{Carleo and Troyer}{Carleo and
  Troyer}{2017}]{carleo2017solving}
\textsc{Carleo, G. and M.~Troyer} (2017): \enquote{Solving the quantum
  many-body problem with artificial neural networks,} \emph{Science}, 355,
  602--606.

\bibitem[\protect\citeauthoryear{Chan-Wai-Nam, Mikael, and Warin}{Chan-Wai-Nam
  et~al.}{2019}]{chan2019machine}
\textsc{Chan-Wai-Nam, Q., J.~Mikael, and X.~Warin} (2019): \enquote{Machine
  learning for semi linear PDEs,} \emph{Journal of Scientific Computing}, 79,
  1667--1712.

\bibitem[\protect\citeauthoryear{Chi and Tan}{Chi and
  Tan}{2011}]{chi2011optimal}
\textsc{Chi, Y. and K.~S. Tan} (2011): \enquote{Optimal reinsurance under VaR
  and CVaR risk measures: a simplified approach,} \emph{ASTIN Bulletin: The
  Journal of the IAA}, 41, 487--509.

\bibitem[\protect\citeauthoryear{Cr{\'e}pey}{Cr{\'e}pey}{2015{\natexlab{a}}}]{crepey2015bilateral}
\textsc{Cr{\'e}pey, S.} (2015{\natexlab{a}}): \enquote{Bilateral counterparty
  risk under funding constraints-Part I: Pricing,} \emph{Mathematical Finance},
  25, 1--22.

\bibitem[\protect\citeauthoryear{Cr{\'e}pey}{Cr{\'e}pey}{2015{\natexlab{b}}}]{crepey2015bilateralb}
---\hspace{-.1pt}---\hspace{-.1pt}--- (2015{\natexlab{b}}): \enquote{Bilateral
  counterparty risk under funding constraints-Part II: CVA,} \emph{Mathematical
  Finance}, 25, 23--50.

\bibitem[\protect\citeauthoryear{Cr{\'e}pey, Jeanblanc, and Zargari}{Cr{\'e}pey
  et~al.}{2010}]{crepey2010counterparty}
\textsc{Cr{\'e}pey, S., M.~Jeanblanc, and B.~Zargari} (2010):
  \enquote{Counterparty risk on a CDS in a Markov chain copula model with joint
  defaults,} in \emph{Recent Advances in Financial Engineering 2009}, World
  Scientific, 91--126.

\bibitem[\protect\citeauthoryear{Dasgupta, Hammond, and Maskin}{Dasgupta
  et~al.}{1979}]{dasgupta1979implementation}
\textsc{Dasgupta, P., P.~Hammond, and E.~Maskin} (1979): \enquote{The
  implementation of social choice rules: Some general results on incentive
  compatibility,} \emph{The Review of Economic Studies}, 46, 185--216.

\bibitem[\protect\citeauthoryear{Duffie and Huang}{Duffie and
  Huang}{1996}]{duffie1996swap}
\textsc{Duffie, D. and M.~Huang} (1996): \enquote{Swap rates and credit
  quality,} \emph{The Journal of Finance}, 51, 921--949.

\bibitem[\protect\citeauthoryear{E, Han, and Jentzen}{E
  et~al.}{2017}]{weinan2017deep}
\textsc{E, W., J.~Han, and A.~Jentzen} (2017): \enquote{Deep learning-based
  numerical methods for high-dimensional parabolic partial differential
  equations and backward stochastic differential equations,}
  \emph{Communications in Mathematics and Statistics}, 5, 349--380.

\bibitem[\protect\citeauthoryear{El~Karoui, Peng, and Quenez}{El~Karoui
  et~al.}{1997}]{el1997backward}
\textsc{El~Karoui, N., S.~Peng, and M.~C. Quenez} (1997): \enquote{Backward
  stochastic differential equations in finance,} \emph{Mathematical finance},
  7, 1--71.

\bibitem[\protect\citeauthoryear{Friedman}{Friedman}{2008}]{friedman2008partial}
\textsc{Friedman, A.} (2008): \emph{Partial differential equations of parabolic
  type}, Courier Dover Publications.

\bibitem[\protect\citeauthoryear{Gers, Schraudolph, and Schmidhuber}{Gers
  et~al.}{2002}]{gers2002learning}
\textsc{Gers, F.~A., N.~N. Schraudolph, and J.~Schmidhuber} (2002):
  \enquote{Learning precise timing with LSTM recurrent networks,} \emph{Journal
  of machine learning research}, 3, 115--143.

\bibitem[\protect\citeauthoryear{Gilbarg and Trudinger}{Gilbarg and
  Trudinger}{2015}]{gilbarg2015elliptic}
\textsc{Gilbarg, D. and N.~S. Trudinger} (2015): \emph{Elliptic partial
  differential equations of second order}, vol. 224, springer.

\bibitem[\protect\citeauthoryear{Gnoatto, Reisinger, and Picarelli}{Gnoatto
  et~al.}{2020}]{gnoatto2020deep}
\textsc{Gnoatto, A., C.~Reisinger, and A.~Picarelli} (2020): \enquote{Deep xVA
  Solver--A neural network based counterparty credit risk management
  framework,} \emph{Available at SSRN 3594076}.

\bibitem[\protect\citeauthoryear{Graves, Mohamed, and Hinton}{Graves
  et~al.}{2013}]{graves2013speech}
\textsc{Graves, A., A.-r. Mohamed, and G.~Hinton} (2013): \enquote{Speech
  recognition with deep recurrent neural networks,} in \emph{2013 IEEE
  international conference on acoustics, speech and signal processing}, Ieee,
  6645--6649.

\bibitem[\protect\citeauthoryear{Gregory}{Gregory}{2009}]{gregory2009being}
\textsc{Gregory, J.} (2009): \enquote{Being two-faced over counterparty credit
  risk,} \emph{Risk}, 22, 86--90.

\bibitem[\protect\citeauthoryear{Han, Jentzen, and E}{Han
  et~al.}{2018}]{han2018solving}
\textsc{Han, J., A.~Jentzen, and W.~E} (2018): \enquote{Solving
  high-dimensional partial differential equations using deep learning,}
  \emph{Proceedings of the National Academy of Sciences}, 115, 8505--8510.

\bibitem[\protect\citeauthoryear{Henry-Labord{\`e}re}{Henry-Labord{\`e}re}{2012}]{henry2012cutting}
\textsc{Henry-Labord{\`e}re, P.} (2012): \enquote{Cutting CVA's complexity,}
  \emph{Risk}, 25, 67.

\bibitem[\protect\citeauthoryear{Henry-Labordere}{Henry-Labordere}{2017}]{henry2017deep}
\textsc{Henry-Labordere, P.} (2017): \enquote{Deep primal-dual algorithm for
  BSDEs: Applications of machine learning to CVA and IM,} \emph{Available at
  SSRN 3071506}.

\bibitem[\protect\citeauthoryear{Hochreiter and Schmidhuber}{Hochreiter and
  Schmidhuber}{1997}]{hochreiter1997long}
\textsc{Hochreiter, S. and J.~Schmidhuber} (1997): \enquote{Long short-term
  memory,} \emph{Neural computation}, 9, 1735--1780.

\bibitem[\protect\citeauthoryear{Hu, Jiang, Liang, and Wei}{Hu
  et~al.}{2012}]{hu2012fully}
\textsc{Hu, B., L.~Jiang, J.~Liang, and W.~Wei} (2012): \enquote{A fully
  non-linear PDE problem from pricing CDS with counterparty risk,}
  \emph{Discrete \& Continuous Dynamical Systems-B}, 17, 2001.

\bibitem[\protect\citeauthoryear{Huberman, Mayers, and Smith~Jr}{Huberman
  et~al.}{1983}]{huberman1983optimal}
\textsc{Huberman, G., D.~Mayers, and C.~W. Smith~Jr} (1983): \enquote{Optimal
  insurance policy indemnity schedules,} \emph{The Bell Journal of Economics},
  415--426.

\bibitem[\protect\citeauthoryear{Hull and White}{Hull and
  White}{2016}]{hull2016xvas}
\textsc{Hull, J. and A.~White} (2016): \enquote{XVAs: A gap between theory and
  practice,} \emph{Risk}, 29, 50--52.

\bibitem[\protect\citeauthoryear{Hur{\'e}, Pham, and Warin}{Hur{\'e}
  et~al.}{2019}]{hure2019some}
\textsc{Hur{\'e}, C., H.~Pham, and X.~Warin} (2019): \enquote{Some machine
  learning schemes for high-dimensional nonlinear PDEs,} \emph{arXiv preprint
  arXiv:1902.01599}, 2.

\bibitem[\protect\citeauthoryear{Jarrow and Yu}{Jarrow and
  Yu}{2001}]{jarrow2001counterparty}
\textsc{Jarrow, R.~A. and F.~Yu} (2001): \enquote{Counterparty risk and the
  pricing of defaultable securities,} \emph{the Journal of Finance}, 56,
  1765--1799.

\bibitem[\protect\citeauthoryear{Kim and Leung}{Kim and
  Leung}{2016}]{kim2016pricing}
\textsc{Kim, J. and T.~Leung} (2016): \enquote{Pricing derivatives with
  counterparty risk and collateralization: A fixed point approach,}
  \emph{European Journal of Operational Research}, 249, 525--539.

\bibitem[\protect\citeauthoryear{Kingma and Ba}{Kingma and
  Ba}{2014}]{kingma2014adam}
\textsc{Kingma, D.~P. and J.~Ba} (2014): \enquote{Adam: A method for stochastic
  optimization,} \emph{arXiv preprint arXiv:1412.6980}.

\bibitem[\protect\citeauthoryear{Lieberman}{Lieberman}{1996}]{lieberman1996second}
\textsc{Lieberman, G.~M.} (1996): \emph{Second order parabolic differential
  equations}, World scientific.

\bibitem[\protect\citeauthoryear{Myerson}{Myerson}{1979}]{myerson1979incentive}
\textsc{Myerson, R.~B.} (1979): \enquote{Incentive compatibility and the
  bargaining problem,} \emph{Econometrica: journal of the Econometric Society},
  61--73.

\bibitem[\protect\citeauthoryear{Pallavicini, Perini, and Brigo}{Pallavicini
  et~al.}{2011}]{pallavicini2011funding}
\textsc{Pallavicini, A., D.~Perini, and D.~Brigo} (2011): \enquote{Funding
  valuation adjustment: a consistent framework including cva, dva, collateral,
  netting rules and re-hypothecation,} \emph{DVA, Collateral, Netting Rules and
  Re-Hypothecation (December 6, 2011)}.

\bibitem[\protect\citeauthoryear{Pham}{Pham}{2009}]{pham2009continuous}
\textsc{Pham, H.} (2009): \emph{Continuous-time stochastic control and
  optimization with financial applications}, vol.~61, Springer Science \&
  Business Media.

\bibitem[\protect\citeauthoryear{Picard}{Picard}{2000}]{picard2000design}
\textsc{Picard, P.} (2000): \enquote{On the design of optimal insurance
  policies under manipulation of audit cost,} \emph{International Economic
  Review}, 41, 1049--1071.

\bibitem[\protect\citeauthoryear{Raissi}{Raissi}{2018{\natexlab{a}}}]{raissi2018deep}
\textsc{Raissi, M.} (2018{\natexlab{a}}): \enquote{Deep hidden physics models:
  Deep learning of nonlinear partial differential equations,} \emph{The Journal
  of Machine Learning Research}, 19, 932--955.

\bibitem[\protect\citeauthoryear{Raissi}{Raissi}{2018{\natexlab{b}}}]{raissi2018forward}
---\hspace{-.1pt}---\hspace{-.1pt}--- (2018{\natexlab{b}}):
  \enquote{Forward-backward stochastic neural networks: Deep learning of
  high-dimensional partial differential equations,} \emph{arXiv preprint
  arXiv:1804.07010}.

\bibitem[\protect\citeauthoryear{Raissi, Perdikaris, and Karniadakis}{Raissi
  et~al.}{2017}]{raissi2017physics}
\textsc{Raissi, M., P.~Perdikaris, and G.~E. Karniadakis} (2017):
  \enquote{Physics informed deep learning (part i): Data-driven solutions of
  nonlinear partial differential equations,} \emph{arXiv preprint
  arXiv:1711.10561}.

\bibitem[\protect\citeauthoryear{Ruf and Wang}{Ruf and
  Wang}{2020}]{ruf2020neural}
\textsc{Ruf, J. and W.~Wang} (2020): \enquote{Neural networks for option
  pricing and hedging: a literature review,} \emph{Journal of Computational
  Finance, Forthcoming}.

\bibitem[\protect\citeauthoryear{Schmidhuber, Wierstra, and Gomez}{Schmidhuber
  et~al.}{2005}]{schmidhuber2005evolino}
\textsc{Schmidhuber, J., D.~Wierstra, and F.~J. Gomez} (2005):
  \enquote{Evolino: Hybrid neuroevolution/optimal linear search for sequence
  prediction,} in \emph{Proceedings of the 19th International Joint
  Conferenceon Artificial Intelligence (IJCAI)}.

\bibitem[\protect\citeauthoryear{Sirignano and Spiliopoulos}{Sirignano and
  Spiliopoulos}{2018}]{sirignano2018dgm}
\textsc{Sirignano, J. and K.~Spiliopoulos} (2018): \enquote{DGM: A deep
  learning algorithm for solving partial differential equations,} \emph{Journal
  of computational physics}, 375, 1339--1364.

\bibitem[\protect\citeauthoryear{Tan, Wei, Wei, and Zhuang}{Tan
  et~al.}{2020}]{tan2020optimal}
\textsc{Tan, K.~S., P.~Wei, W.~Wei, and S.~C. Zhuang} (2020): \enquote{Optimal
  dynamic reinsurance policies under a generalized Denneberg?s absolute
  deviation principle,} \emph{European Journal of Operational Research}, 282,
  345--362.

\bibitem[\protect\citeauthoryear{Weinan, Hutzenthaler, Jentzen, and
  Kruse}{Weinan et~al.}{2019}]{weinan2019multilevel}
\textsc{Weinan, E., M.~Hutzenthaler, A.~Jentzen, and T.~Kruse} (2019):
  \enquote{On multilevel Picard numerical approximations for high-dimensional
  nonlinear parabolic partial differential equations and high-dimensional
  nonlinear backward stochastic differential equations,} \emph{Journal of
  Scientific Computing}, 79, 1534--1571.

\bibitem[\protect\citeauthoryear{Weinan and Yu}{Weinan and
  Yu}{2018}]{weinan2018deep}
\textsc{Weinan, E. and B.~Yu} (2018): \enquote{The deep Ritz method: a deep
  learning-based numerical algorithm for solving variational problems,}
  \emph{Communications in Mathematics and Statistics}, 6, 1--12.

\bibitem[\protect\citeauthoryear{Xu, Zhou, and Zhuang}{Xu
  et~al.}{2019}]{xu2019optimal}
\textsc{Xu, Z.~Q., X.~Y. Zhou, and S.~C. Zhuang} (2019): \enquote{Optimal
  insurance under rank-dependent utility and incentive compatibility,}
  \emph{Mathematical Finance}, 29, 659--692.

\end{thebibliography}
\bibliographystyle{ecta}  

\end{document}